\documentclass[journal]{IEEEtran}
\usepackage{stfloats}
\usepackage{float}
\usepackage{graphicx}
\usepackage{subfigure}
\usepackage{color}
\usepackage{algpseudocode}
\usepackage{amssymb}
\usepackage{amsmath}
\usepackage{algorithm}
\usepackage{bm}
\usepackage[colorlinks,citecolor=blue,linkcolor=blue,urlcolor=blue]{hyperref}

\newtheorem{remark}{Remark}
\newtheorem{theorem}{Theorem}
\newtheorem{corollary}{Corollary}

\newtheorem{proof}{Proof}
\newtheorem{proposition}{Proposition}

\begin{document}

	\title{Physical Layer Security for FAS-Aided Short-Packet Systems: A Variable Block-Correlation Approach}
	
	\author{Jianchao Zheng, Tuo Wu, Kai-Kit Wong,~\IEEEmembership{Fellow,~IEEE},  Baiyang Liu,  Runyu Pan, Maged Elkashlan, \\ Kin-Fai Tong,~\IEEEmembership{Fellow,~IEEE}, Hyundong Shin,~\IEEEmembership{Fellow,~IEEE}, and Sumei Sun,~\IEEEmembership{Fellow,~IEEE} 
		
		\thanks{(\textit{Corresponding author: Tuo Wu.})}
		\thanks{J. Zheng is with the School of Computer Science and Engineering, Huizhou University, Huizhou 516007, China (E-mail: $\rm zhengjch@hzu.edu.cn$).}
		\thanks{T. Wu is with the Department of Electrical Engineering, City University of Hong Kong, Hong Kong, China (E-mail: $\rm tuowu2@cityu.edu.hk$).}
		\thanks{R. Pan is with the School of Computer Science and Technology, Shandong University, Qingdao 266237, China (E-mail: $\rm rypan@sdu.edu.cn$).}
		\thanks{K.-K. Wong is with the Department of Electronic and Electrical Engineering, University College London, WC1E7JE London, U.K., and is also affiliated with the Department of Electronic Engineering, Kyung Hee University, Yongin-si, Gyeonggi-do 17104, Korea. (E-mail: $\rm kai$-$\rm kit.wong@ucl.ac.uk$).} 
		\thanks{B. Liu and K. F. Tong are with the School of Science and Technology, Hong Kong Metropolitan University, Hong Kong SAR, China (E-mail: $\rm \{byliu, ktong\}@hkmu.edu.hk$).}
		\thanks{M. Elkashlan is with Queen Mary University of London, United Kingdom (E-mail: $\rm maged.elkashlan@qmul.ac.uk$).}
		\thanks{H. Shin is with the Department of Electronics and information Convergence Engineering, Kyung Hee University, Yongin-si.Gyeonggi-do 17104, Republic of Korea (E-mail: $\rm hshin@khu.ac.kr$).}
		\thanks{S. Sun is with the Institute for Infocomm Research (I2R), A*STAR, Singapore (E-mail: $\rm sunsm@a$-$\rm star.edu.sg$).}
	}
	\markboth{}
	{Wu \MakeLowercase{\textit{et al.}}: Physical Layer Security for FAS-Aided Short-Packet Systems: A Variable Block-Correlation Approach}
	
	\maketitle
	
	\begin{abstract}
	This paper presents a comprehensive physical layer security (PLS) framework for fluid antenna system (FAS)-aided short-packet communications under the variable block-correlation model (VBCM). We consider a downlink wiretap scenario in which a base station transmits confidential short packets to a legitimate receiver user (RU) in the presence of an eavesdropper user (EU), where both the RU and EU are equipped with fluid antennas. Unlike existing FAS security analyses that rely on constant block-correlation models or infinite-blocklength assumptions, we incorporate the VBCM to accurately capture the non-uniform spatial correlation structure inherent in practical FAS deployments. By employing a piecewise linear approximation of the decoding error probability and Gauss-Chebyshev quadrature, we derive closed-form and asymptotic expressions for the average achievable secrecy throughput (AAST). We further prove that the AAST is monotonically non-decreasing in the number of RU ports, which reduces the three-dimensional joint optimization of transmit power, blocklength, and port number to a two-dimensional grid search (GS). Numerical results demonstrate that the FAS-aided system achieves up to an order-of-magnitude secrecy throughput improvement over conventional fixed-position antenna systems, and reveal that blocklength selection is the most critical design parameter in the joint optimization.
	\end{abstract}
	
	\begin{IEEEkeywords}
		Fluid antenna system (FAS), physical layer security (PLS), variable block-correlation model (VBCM), average achievable secrecy throughput (AAST), grid search (GS).
	\end{IEEEkeywords}
	
	\section{Introduction}

	{\IEEEPARstart{T}{he} evolution of wireless communications towards sixth-generation (6G) networks demands cutting-edge technologies that simultaneously deliver ultra-high reliability, low latency, and stringent security guarantees \cite{Tariq-2020}. Among the transformative applications envisioned for 6G, ultra-reliable low-latency communications (URLLC) and mission-critical Internet-of-Things (IoT) rely fundamentally on short-packet transmissions, where finite blocklength effects introduce an irreducible decoding error probability that departs significantly from the classical Shannon capacity \cite{Polyanskiy20}. In addition, advanced enabling techniques such as high-precision localization \cite{TWuJSAC24, TWuIoTJ24, TWuCL24}, intelligent resource management \cite{TWuTMC25}, and integrated sensing and communication (ISAC) \cite{MaD26} are also pivotal to realizing 6G visions. As these short-packet applications proliferate in security-sensitive domains---including autonomous vehicles, remote surgery, and industrial control---ensuring the confidentiality of short-packet transmissions has emerged as a critical and urgent research challenge.
		
	Massive multiple-input multiple-output (MIMO) serves as a cornerstone of contemporary fifth-generation (5G) networks, exploiting a large number of antennas for spatial diversity \cite{FuJ26}. However, its performance is fundamentally limited by the number of available radio-frequency (RF) chains at the terminals, while rising hardware costs, power consumption, and operational complexity pose substantial scalability challenges. Fluid antenna systems (FAS) \cite{NewW25,LuW25,HongH26,NewW26,TWu20243},  sometimes referred to movable antennas  \cite{Zhu-Wong-2024,LZhu25}, have emerged as a paradigm-shifting solution that overcomes these limitations \cite{YaoJ241, HXu23, XLai23, BC24, LaiX242, YaoJ251, JYao2024, YaoJ252, Ghadi-2023, New-twc2023, NWaqar23, CWang24, TWuTCOMM26, TWuTVT25, TWuJSTSP25, YaoJTWC26, HChenTNSE25, HXuTWC25, LZhouWCL24, YaoJTMC25}. The broader FAS framework encompasses electronically switchable arrays \cite{FAS21,WongK20}, reconfigurable metasurfaces \cite{BLiu25}, and spatially distributed antenna pixels \cite{Zhang25}, enabling dynamic control of spatial radiating points to enhance the degrees of freedom (DoFs) far beyond conventional fixed-position antenna (FPA) arrays \cite{TWu20243}.
	
	Physical layer security (PLS) has received considerable attention as a key technique for securing wireless communications without relying on cryptographic keys \cite{Wyner1975,HNiututorial}. Within the FAS paradigm, PLS introduces both distinct challenges and novel opportunities: the intelligent port selection mechanism of FAS enables the legitimate user to exploit spatial diversity for secrecy enhancement, while the eavesdropper may also benefit from similar reconfigurability. These aspects have been investigated in several recent studies \cite{Shojaeifard, Wong-frontiers22,Security1,Security2,Security3,Security4,Security5,SYangWCL25,TuoW}. Nonetheless, existing FAS security analyses predominantly operate under two simplifying assumptions: (i) infinite blocklength (Shannon capacity), which overlooks the finite blocklength effects inherent in short-packet transmissions, and (ii) constant block-correlation models (BCM), which assume uniform correlation coefficients across all channel blocks.
	
	The accuracy of spatial correlation modeling is critical for FAS security analysis, as the correlation structure directly governs the channel statistics of both the legitimate receiver and the eavesdropper, thereby determining the secrecy capacity and outage performance. The $N\times N$ channel covariance matrix of FAS exhibits a Toeplitz structure that renders exact analytical treatment intractable \cite{FAS22}. To enable tractable analysis, spatial block-correlation models (BCM) have been proposed, which partition the Toeplitz-structured covariance matrix into $D$ independent blocks, each characterized by a correlation coefficient \cite{BC24, LaiX242}. While the original work \cite{BC24} recognized that allowing correlation coefficients to vary across different blocks could in principle yield better approximation accuracy, it focused on the practically simpler constant BCM---where a single uniform coefficient is shared by all blocks---to avoid the exponential $D^{N-D}$ search complexity of the general variable case. However, the constant BCM suffers from fundamental limitations, especially in compact FAS deployments. When the number of ports $N$ is small (e.g., $N < 20$), the eigenvalue distribution of the Toeplitz covariance matrix lacks significant sparsity, meaning that the differences between individual eigenvalues are not negligible and each port substantially influences the overall channel statistics. Forcing a uniform correlation coefficient across all blocks in such settings fails to capture the non-uniform intra-block correlation structure, leading to significant approximation errors that are particularly detrimental in security-sensitive applications where modeling inaccuracies translate directly to over- or under-estimation of the secrecy throughput \cite{FAS22, Chai22, FAMS}. To address this fundamental limitation, the variable block-correlation model (VBCM) was recently established in \cite{LaiX}, which allows the correlation coefficient to vary across different blocks. By leveraging rigorous eigenvalue analysis, \cite{LaiX} derives closed-form expressions for the optimal block-specific correlation coefficients and develops a low-complexity algorithm that reduces the search complexity from the exponential $D^{N-D}$ to a linear $(N-D)\times D$ procedure, achieving substantially lower approximation error compared to the constant BCM.
	
	The convergence of FAS, short-packet communications, and PLS creates a compelling yet largely unexplored research direction. From a practical standpoint, many security-critical 6G applications---such as URLLC-based industrial control, autonomous driving, and remote healthcare---inherently require short-packet transmissions, making the finite blocklength regime the natural operating condition rather than an edge case. Moreover, FAS is uniquely suited to enhance short-packet security: the port selection diversity of FAS provides a spatial DoF that can significantly reduce the decoding error probability at the legitimate receiver while maintaining a statistical advantage over the eavesdropper. Unlike the infinite blocklength regime, where secrecy is determined solely by the signal-to-noise ratio (SNR) gap, the short-packet regime introduces a channel dispersion penalty $\sqrt{V_R/M}$ that amplifies the sensitivity of the secrecy throughput to the channel quality---making the diversity gain from FAS port selection even more valuable. Furthermore, the VBCM is essential (not merely beneficial) for this analysis. In particular, when the number of antenna ports is limited (typically $N < 20$), the constant BCM's inability to capture non-uniform intra-block correlations can produce substantial errors in the predicted secrecy throughput. Overly optimistic models may lead to dangerously under-designed systems, while overly pessimistic models result in unnecessarily conservative---and therefore inefficient---resource allocation. By assigning block-optimal correlation coefficients derived from eigenvalue analysis \cite{LaiX}, the VBCM provides a reliable and tractable characterization of the correlated FAS channel that is indispensable for accurate short-packet security design \cite{BC24, FAS22}.
	
	Despite the strong motivation, the integration of VBCM into FAS short-packet security analysis poses formidable analytical challenges. \textit{First}, the VBCM's multi-block cumulative distribution function (CDF) product structure, combined with the Gaussian Q-function from finite blocklength theory, results in complex nested integrals over Marcum Q-functions and multi-dimensional probability distributions that do not admit closed-form solutions. \textit{Second}, the piecewise linear approximation of the decoding error probability partitions the integration domain into multiple regions whose boundaries are implicit functions of the eavesdropper's SNR, requiring careful case-by-case analysis and numerical root-finding. \textit{Third}, the joint optimization of transmit power $P$, blocklength $M$, and port number $N_R$ constitutes a mixed-integer non-convex program, where $N_R$ is discrete and the average achievable secrecy throughput (AAST) is non-monotonic in both $M$ and the number of information bits $m$, giving rise to a complex multi-modal optimization landscape.

	To overcome these challenges, this paper develops a comprehensive analytical and optimization framework for FAS-aided secure short-packet communications under the VBCM. Our key approach involves: (i) deriving closed-form AAST expressions by combining the piecewise linear approximation of the decoding error probability with Gauss-Chebyshev quadrature for efficient numerical integration; (ii) obtaining asymptotic AAST expressions in the high-SNR regime that provide design insights and enable low-complexity optimization; and (iii) proving a monotonicity property of the AAST with respect to $N_R$, which reduces the three-dimensional optimization to a two-dimensional grid search (GS). Our framework fundamentally differs from prior FAS security works \cite{Security1,Security2,Security3,Security4,Security5} in three aspects: it replaces the constant BCM with the more accurate VBCM, it operates in the practically relevant finite blocklength regime rather than the Shannon limit, and it jointly optimizes all three key system parameters with provable convergence guarantees.}

The primary contributions of this work are summarized as follows:

\begin{itemize}
	
	\item \textbf{\textit{VBCM-Based Security Analysis Framework:}} We establish the first analytical framework for FAS-aided secure short-packet communications under the VBCM. By employing a piecewise linear approximation of the decoding error probability and Gauss-Chebyshev quadrature, we derive closed-form expressions for the AAST that jointly account for the variable spatial correlation structure and finite blocklength effects. Asymptotic AAST expressions are further obtained in the high-SNR regime, providing tractable design guidelines.
	
	\item \textbf{\textit{Structural Properties and Dimensionality Reduction:}} We prove that the asymptotic AAST is monotonically non-decreasing in $N_R$ (Proposition~\ref{prop:monotone}), which yields the corollary that the optimal port number is always $N_R^* = N_{\max}$. This structural result reduces the original three-dimensional mixed-integer optimization to a two-dimensional continuous search, significantly lowering computational complexity.
	
	\item \textbf{\textit{Joint Optimization with Convergence Guarantees:}} We develop a GS optimization algorithm to jointly optimize the transmit power, blocklength, and port number for maximizing the secrecy throughput. The algorithm is equipped with a theoretical convergence analysis based on Lipschitz continuity, providing explicit error bounds and minimum grid resolution requirements.
	
	\item \textbf{\textit{Comprehensive Design Insights:}} Through extensive numerical analysis, we reveal that (i) FAS achieves up to an order-of-magnitude secrecy throughput gain over conventional FPA systems, (ii) blocklength selection is the most critical design parameter, (iii) the secrecy gain is governed by the port count asymmetry $N_R/N_E$ rather than absolute port numbers, and (iv) FAS port diversity is most effective in the moderate-SNR regime.

\end{itemize}

	The remainder of this paper is organized as follows. Section~II introduces the system model for FAS-enabled secure communication and the channel correlation model. Section~III presents the security analysis framework, including the derivation of the AAST and asymptotic AAST expressions. Section~IV details the optimization algorithm for FAS security enhancement. Section~V provides numerical results and performance validation. Finally, Section~VI concludes the paper.
	
	\section{System Model}
	
	We consider a wireless downlink secure short-packet communication system, in which a single-antenna base station (BS) transmits confidential information to a legitimate receiver user (RU) in the presence of an eavesdropper user (EU). Both the RU and the EU are equipped with one-dimensional linear fluid antennas capable of switching among $N_R$ and $N_E$ discrete ports, respectively, within a linear space of $W\lambda$, where $W$ denotes the normalized antenna size and $\lambda$ is the carrier wavelength. We assume quasi-static block fading channels, where the channel coefficients remain constant within each transmission block but vary independently across different blocks.
	
	\subsection{Signal Model}
	The BS transmits the signal $s\sim \mathcal{CN}(0, 1)$ to the RU with transmit power $P$. The signals received at the $k$-th port of the RU and the $j$-th port of the EU are respectively given by
	\begin{align}
		y_{R,k} &= \sqrt{P} g_{R,k} d_{R}^{-\frac{a_R}{2}}s + n_{R,k}, \quad k \in \{1, \dots, N_{R}\},\\
		y_{E,j} &= \sqrt{P} g_{E,j} d_{E}^{-\frac{a_E}{2}}s + n_{E,j}, \quad j \in \{1, \dots, N_{E}\},
	\end{align}
	where $g_{R,k} \sim \mathcal{CN}(0, \eta_{R})$ and $g_{E,j} \sim \mathcal{CN}(0, \eta_{E})$ denote the complex Gaussian channel coefficients from the BS to the $k$-th port of the RU and the $j$-th port of the EU, respectively, with $\eta_{R}$ and $\eta_{E}$ representing the corresponding average channel power gains. In addition, $n_{R,k} \sim \mathcal{CN}(0, \sigma_{R}^2)$ and $n_{E,j} \sim \mathcal{CN}(0, \sigma_{E}^2)$ represent the additive white Gaussian noise (AWGN) at the RU and the EU, respectively. Moreover, $d_{R}$ and $d_{E}$ denote the distances from the BS to the RU and the EU, respectively, and $a_R$ and $a_E$ are the path-loss exponents for the RU and the EU links, respectively.
	
	By exploiting the spatial diversity offered by the fluid antenna, both users select the optimal port that maximizes the channel amplitude gain, i.e.,
	\begin{align}
		i^* &= \arg\max_{k \in \{1, \dots, N_R\}} |g_{R,k}|, \nonumber\\
		j^* &= \arg\max_{j \in \{1, \dots, N_E\}} |g_{E,j}|.
	\end{align}
	Accordingly, the maximum channel amplitude gains at the RU and the EU are defined as $\alpha_{R} = |g_{R,i^*}|$ and $\alpha_{E} = |g_{E,j^*}|$, respectively. The instantaneous signal-to-noise ratios (SNRs) at the selected ports of the RU and the EU are then given by
	\begin{align}
		\gamma_R = \frac{P d_R^{-a_R}}{\sigma_R^2} \alpha_R^2, \quad \gamma_E = \frac{P d_E^{-a_E}}{\sigma_E^2} \alpha_E^2. \label{q6}
	\end{align}
	
	\subsection{FAS Channel Correlation Model}
	Due to the small spacing between adjacent ports of the fluid antenna, the channel coefficients are spatially correlated. The channel vector of the RU is defined as $\mathbf{g}_R = [g_{R,1}, \dots, g_{R,N_R}]^T \sim \mathcal{CN}(\mathbf{0}, \mathbf{\Sigma}_R)$, where $\mathbf{\Sigma}_R = \mathbb{E}[\mathbf{g}_R \mathbf{g}_R^H]$ is the covariance matrix. Following the Jakes model \cite{FAS21}, the $(k,l)$-th entry of the normalized correlation matrix is given by
	\begin{align}
		[\mathbf{R}]_{k,l} = J_0\left(\frac{2\pi |k-l| W}{N_R-1}\right),
	\end{align}
	where $J_0(\cdot)$ denotes the zero-order Bessel function of the first kind. Thus, the covariance matrix of the RU can be expressed as
	\begin{align}
		\mathbf{\Sigma}_R = \eta_R \begin{pmatrix}
			1 & [\mathbf{R}]_{1,2} & \cdots & [\mathbf{R}]_{1,N_R} \\
			[\mathbf{R}]_{1,2} & 1 & \cdots & [\mathbf{R}]_{2,N_R} \\
			\vdots &  & \ddots & \vdots \\
			[\mathbf{R}]_{1,N_R}& [\mathbf{R}]_{2,N_R} & \cdots & 1
		\end{pmatrix},
	\end{align}
	which exhibits a symmetric Toeplitz structure. Similarly, the channel vector of the EU is defined as $\mathbf{g}_E = [g_{E,1}, \dots, g_{E,N_E}]^T \sim \mathcal{CN}(\mathbf{0}, \mathbf{\Sigma}_E)$, where $\mathbf{\Sigma}_E = \eta_E \mathbf{R}_E$ and $[\mathbf{R}_E]_{i,j} = J_0\left(\frac{2\pi |i-j| W}{N_E-1}\right)$.

	\section{Security Throughput of FAS Communication}
	In this section, we derive the closed-form and asymptotic expressions for the average achievable secrecy throughput (AAST) of the considered FAS-aided secure short-packet communication system.
			
	\subsection{Achievable Secrecy Rate and Performance Metrics Under Finite Blocklength}
	Based on the SNR expressions in \eqref{q6}, the achievable secrecy rate under finite blocklength can be approximated as \cite{Security5}
	\begin{align}
	R(M,\epsilon,\delta)=C_{S}-\sqrt{\frac{V_{R}}{M}}\frac{Q^{-1}(\epsilon)}{\ln 2}-\sqrt{\frac{V_{E}}{M}}\frac{Q^{-1}(\delta)}{\ln 2},
	\end{align} 
	conditioned on $\gamma_R>\gamma_E$ (otherwise the secrecy rate is zero), where $M$ denotes the blocklength, $\epsilon$ is the decoding error probability at the RU, and $\delta$ represents the information leakage level. Here, $C_s=\log_2(1 + \gamma_R)-\log_2(1 + \gamma_E)$ is the secrecy capacity, $V_{R} = 1 - (1 + \gamma_{R})^{-2}$ and $V_{E} = 1 - (1 + \gamma_{E})^{-2}$ are the channel dispersions at the RU and the EU, respectively, and $Q^{-1}(\cdot)$ denotes the inverse of the Gaussian Q-function $Q(x) = \int_{x}^{\infty} \frac{1}{\sqrt{2\pi}} e^{-\frac{t^{2}}{2}} dt$.
	
	The BS transmits $m$ information bits to the RU. Given the blocklength $M$, the transmission rate is $R=m/M$. The decoding error probability $\epsilon$ is given by
	\begin{align}
	\epsilon=Q\left(\sqrt{\frac{M}{V_{R}}}\left(\ln\frac{1+\gamma_{R}}{1+\gamma_{E}}-\sqrt{\frac{V_{E}}{M}}Q^{-1}(\delta)-\frac{m}{M}\ln 2\right)\right),
	\end{align}
	for $\gamma_R>\gamma_E$. When $\gamma_R\leq\gamma_E$, the secrecy capacity is zero and we simply set $\epsilon=1$ for this trivial case. The average achievable secrecy throughput  can be expressed as
	\begin{align}\label{eq:st_def}
	T=\mathbb{E}_{\gamma_{R},\gamma_{E}}\left[\frac{m}{M}(1-\epsilon)\right]=\frac{m}{M}\left(1-\mathbb{E}_{\gamma_{R},\gamma_{E}}[\epsilon]\right),
	\end{align}
	where $\mathbb{E}_{\gamma_{R},\gamma_{E}}[\epsilon]$ denotes the average decoding error probability.
	
	\subsection{Variable Block-Correlation Model and Channel Statistics Distribution}
	
	To evaluate \eqref{eq:st_def}, the statistical distributions of $\gamma_{R}$ and $\gamma_{E}$ are required. However, direct analysis based on the Toeplitz covariance matrix $\mathbf{\Sigma}$ is analytically intractable. To address this, we follow \cite{TuoW,LaiX} and employ the variable block-correlation model (VBCM) as a mathematically tractable approximation. Specifically, the VBCM partitions the $N\times N$ Toeplitz covariance matrix into $D$ independent blocks, where, unlike the conventional constant BCM \cite{BC24} that assigns a single uniform coefficient to all blocks, the VBCM assigns each block $d$ its own optimal correlation coefficient $\rho_d$ derived from closed-form eigenvalue analysis \cite{LaiX}. This block-specific parameterization accurately captures the non-uniform spatial correlation structure, reducing approximation error from the exponential $D^{N-D}$ brute-force search to a linear $(N-D)\times D$ procedure \cite{LaiX}. Under the VBCM framework \cite{TuoW}, the overall CDF of the user's maximum channel amplitude is expressed as the product of individual block CDFs, i.e.,
	\begin{align}\label{eq:cdf_final}
		F_{\alpha}(x) = \prod_{d=1}^{D} F_{\alpha_d}(x),
	\end{align}
	where 
	\begin{align}\label{eq:block_cdf}
		F_{\alpha_d}(x) = \int_0^\infty \left[1 - Q_1\left(\frac{\theta}{\sigma_d}, \frac{x}{\sigma_d}\right)\right]^{L_d} \frac{2\theta}{\eta\rho_d} e^{-\frac{\theta^2}{\eta\rho_d}} d\theta.
	\end{align}
	Here, $Q_1(\cdot,\cdot)$ denotes the first-order Marcum Q-function \cite{book}, $\rho_d \in [0, 1]$ is the variable correlation coefficient for block $d$, $\eta$ is the average channel power gain, $D$ is the number of independent blocks with sizes $\{L_d\}_{d=1}^D$ satisfying $\sum_{d=1}^D L_d = N$, and $\sigma_d = \sqrt{\eta(1-\rho_d)/2}$.
	
	The corresponding PDF is obtained by differentiating \eqref{eq:cdf_final} with respect to $x$, yielding
	\begin{align}\label{eq:pdf_final}
		f_{\alpha}(x) = \sum_{d=1}^{D} \left[ f_{\alpha_d}(x) \prod_{j=1, j \neq d}^{D} F_{\alpha_j}(x) \right],
	\end{align}
	where $f_{\alpha_d}(x) = \frac{d}{dx} F_{\alpha_d}(x)$ represents the PDF of the maximum amplitude in block $d$.
	
	\subsection{SNR Statistics Derivation}
	
	Based on the VBCM channel statistics derived above, we now obtain the CDF and PDF of the SNRs at the RU and the EU. For notational convenience, we define the average SNRs as $\bar{\gamma}_R \triangleq P d_R^{-a_R}/\sigma_R^2$ and $\bar{\gamma}_E \triangleq P d_E^{-a_E}/\sigma_E^2$, so that $\gamma_R = \bar{\gamma}_R \alpha_R^2$ and $\gamma_E = \bar{\gamma}_E \alpha_E^2$ from \eqref{q6}.
	
	The CDF of $\gamma_R$ is obtained by a direct transformation as
	\begin{align}\label{eq:cdf_snr_R}
	F_{\gamma_R}(z)=\Pr\left( \bar{\gamma}_R \alpha_R^2\leq z\right) = F_{\alpha_R}\left(\sqrt{\frac{z}{\bar{\gamma}_R}}\right).
	\end{align}
	Differentiating \eqref{eq:cdf_snr_R} with respect to $z$ via the chain rule yields the PDF of $\gamma_R$:
	\begin{align}\label{eq:pdf_snr_R}
	f_{\gamma_R}(z)=\frac{d}{dz}F_{\alpha_R}\left(\sqrt{\frac{z}{\bar{\gamma}_R}}\right) = \frac{1}{2\sqrt{\bar{\gamma}_R\, z}}\, f_{\alpha_R}\left(\sqrt{\frac{z}{\bar{\gamma}_R}}\right).
	\end{align}
	Similarly, the CDF of $\gamma_E$ is given by
	\begin{align}\label{eq:cdf_snr_E}
	F_{\gamma_E}(y) = F_{\alpha_E}\left(\sqrt{\frac{y}{\bar{\gamma}_E}}\right),
	\end{align}
	and the corresponding PDF is
	\begin{align}\label{eq:pdf_snr_E}
	f_{\gamma_E}(y)=\frac{1}{2\sqrt{\bar{\gamma}_E\, y}}\, f_{\alpha_E}\left(\sqrt{\frac{y}{\bar{\gamma}_E}}\right).
	\end{align}
	
	\begin{remark}[FAS Diversity and Security Implications]\label{remark:diversity}
	The derived SNR statistics in \eqref{eq:cdf_snr_R}--\eqref{eq:pdf_snr_E} reveal the fundamental role of FAS in secure communications. From \eqref{eq:cdf_final}, the CDF $F_{\alpha_R}(x) = \prod_{d=1}^{D} F_{\alpha_d}(x)$ is a product of $D$ block CDFs, each satisfying $F_{\alpha_d}(x) \leq 1$. As the number of RU ports $N_R$ increases (and correspondingly $D$ or $L_d$), $F_{\alpha_R}(x)$ decreases for a given $x$, which shifts the SNR distribution $F_{\gamma_R}(z)$ toward higher values. This is equivalent to a spatial diversity gain achieved through port selection, even with a single RF chain. In contrast, increasing $N_E$ shifts $F_{\gamma_E}(y)$ toward higher SNR values for the eavesdropper, degrading the secrecy performance. Consequently, the secrecy throughput is governed by the \emph{asymmetry} between the legitimate and eavesdropping channels: a favorable secrecy condition requires $\bar{\gamma}_R / \bar{\gamma}_E \gg 1$ or $N_R \gg N_E$ to ensure that $\gamma_R$ statistically dominates $\gamma_E$. This insight provides a design guideline that the RU should be equipped with significantly more FAS ports than the EU to guarantee a positive secrecy throughput.
	\end{remark}
	
	\subsection{Derivation of Secrecy Throughput}
	
	In this subsection, we derive the AAST defined in \eqref{eq:st_def}. Since $\epsilon = 1$ when $\gamma_R \leq \gamma_E$, the average decoding error probability can be written as
	\begin{align}\label{q18}
	\mathbb{E}_{\gamma _{R},\gamma _{E}} \left [\epsilon\right] =\int_0^\infty f_{\gamma_E}(y)\int_y^\infty \epsilon_{\gamma_R|\gamma_E=y}f_{\gamma_R}(x)\,dx\,dy,
	\end{align}
	for $\gamma_R>\gamma_E$; otherwise $\mathbb{E}_{\gamma _{R},\gamma _{E}} \left [\epsilon\right] =1$ for $\gamma_R\leq\gamma_E$.
	
	The integral in \eqref{q18} does not admit a closed-form solution. To circumvent this difficulty, we employ the linear approximation of $\epsilon$ as follows \cite{Security5}.
	\begin{align} \label{q19}
	\epsilon_{\gamma_R|\gamma_E} \approx \hat{\epsilon}\triangleq \! \begin{cases} 1; 
	&\quad \gamma _{R}\leq v \\[2pt]
	 \displaystyle \frac {1}{2}-\omega(\gamma _{R}-\beta ); &\quad v < \gamma_R < u \\[2mm] 0; 
	 &\quad \gamma _{R}\geq u, \end{cases}
	 \end{align}
	where 
	\begin{align} \label{q20}
	\beta=(1+\gamma_{E})\exp \left ({\frac {\mu V^{\frac {1}{2}}(\gamma_{E})}{\sqrt {m}}\ln 2+\frac {M}{m}\ln 2}\right )-1.
	\end{align}
	and $\omega=\sqrt{m/(2\pi\beta(\beta+2))}$, $v=\beta-\frac{1}{2\omega}$, and $u=\beta+\frac{1}{2\omega}$.
	
	We first derive the conditional expectation $\mathbb{E}[\hat{\epsilon}|\gamma_E = y]$, which is given by
	\begin{align}
	\mathbb{E}\left [\hat{\epsilon}|\gamma_E = y\right] =\int_y^\infty \hat{\epsilon}_{\gamma_R|\gamma_E=y}f_{\gamma_R}(x)\,dx.
	\end{align}
	By comparing the value of $y$ with $v$ and $u$, three cases arise.
	
	\textit{Case 1:} When $y\leq v$, substituting the piecewise approximation \eqref{q19} into the conditional expectation, we obtain
	\begin{align}\label{eq:case1_raw}
	\mathbb{E}\left [\hat{\epsilon}\right] =\underbrace{\int_y^v 1\cdot f_{\gamma_R}(x)\,dx}_{I_1}+\underbrace{\int_v^u \left(\frac {1}{2}-\omega(x-\beta )\right)f_{\gamma_R}(x)\,dx}_{I_2}.
	\end{align}
	The first integral is straightforward:
	\begin{align}\label{eq:I1}
	I_1 = F_{\gamma_R}(v) - F_{\gamma_R}(y).
	\end{align}
	For $I_2$, we define $g(x) \triangleq \frac{1}{2} - \omega(x - \beta)$ and apply integration by parts, yielding
	\begin{align}\label{eq:I2_ibp}
	I_2 = \left[g(x) F_{\gamma_R}(x)\right]_v^u + \omega \int_v^u F_{\gamma_R}(x)\,dx,
	\end{align}
	where we have used $g'(x) = -\omega$. The boundary terms are evaluated as
	\begin{align}
	g(u) &= \tfrac{1}{2} - \omega\!\left(\beta + \tfrac{1}{2\omega} - \beta\right) = 0,\nonumber\\
	g(v) &= \tfrac{1}{2} - \omega\!\left(\beta - \tfrac{1}{2\omega} - \beta\right) = 1.
	\end{align}
	Substituting the boundary values into \eqref{eq:I2_ibp} gives
	\begin{align}\label{eq:I2_result}
	I_2 = -F_{\gamma_R}(v) + \omega \int_v^u F_{\gamma_R}(x)\,dx.
	\end{align}
	Combining \eqref{eq:I1} and \eqref{eq:I2_result}, the $F_{\gamma_R}(v)$ terms cancel, and we arrive at
	\begin{align}\label{eq:case1_result}
	\mathbb{E}\left [\hat{\epsilon}\right] =\omega\int_v^u F_{\gamma_R}(x)\,dx-F_{\gamma_R}(y).
	\end{align}
	To evaluate the integral in \eqref{eq:case1_result}, we apply the change of variables $x = \frac{u-v}{2}t + \frac{u+v}{2}$, which maps $x \in [v, u]$ to $t \in [-1, 1]$. Since $u - v = \frac{1}{\omega}$, we have
	\begin{align}
	\omega\int_v^u F_{\gamma_R}(x)\,dx = \omega \cdot \frac{u-v}{2}\int_{-1}^{1} F_{\gamma_R}(x(t))\,dt = \frac{1}{2}\int_{-1}^{1} F_{\gamma_R}(x(t))\,dt.
	\end{align}
	Applying the Gauss-Chebyshev quadrature \cite{NumericalAnalysis}, i.e., $\int_{-1}^{1} h(t)\,dt \approx \frac{\pi}{U_p}\sum_{p=1}^{U_p}\sqrt{1-t_p^2}\, h(t_p)$, yields
	\begin{align}
	\mathbb{E}\left [\hat{\epsilon}\right] =\frac{\pi}{2 U_p}\sum_{p=1}^{U_p}\sqrt{1-t_p^2}F_{\gamma_R}(x_p)-F_{\gamma_R}(y),
	\end{align}
	where $t_p=\cos\left(\frac{2p-1}{2U_p}\pi\right)$ and $x_p=\frac{1}{2\omega}t_p+\beta$.

	\textit{Case 2:} When $v<y< u$, the conditional expectation becomes
	\begin{align}
	\mathbb{E}\left [\hat{\epsilon}\right] &=\int_y^u  \left(\frac {1}{2}-\omega(x-\beta )\right)f_{\gamma_R}(x)\,dx\nonumber\\
	&=\omega\int_y^u F_{\gamma_R}(x)\,dx-F_{\gamma_R}(v).
	\end{align}
	Applying the Gauss-Chebyshev quadrature yields
	\begin{align}
	\mathbb{E}\left [\hat{\epsilon}\right] =\frac{\pi}{2 U_p}\sum_{p=1}^{U_p}\sqrt{1-t_p^2}F_{\gamma_R}(x_p)-F_{\gamma_R}(v).
	\end{align}

	\textit{Case 3:} When $y\geq u$, the secrecy condition is always satisfied and thus
	\begin{align}
	\mathbb{E}\left [\hat{\epsilon}\right] =0.
	\end{align}
	
	Having obtained $\mathbb{E}[\hat{\epsilon}]$, the overall average decoding error probability is computed by averaging over $\gamma_E$:
	\begin{align}\label{q29}
	\mathbb{E}_{\gamma _{R},\gamma _{E}} \left [\epsilon\right] =\int_0^\infty \mathbb{E}\left [\hat{\epsilon}\right] f_{\gamma_E}(y)\, dy.
	\end{align}
	Since $\mathbb{E}[\hat{\epsilon}]$ is a piecewise function with three cases, the integration interval $[0,\infty)$ in \eqref{q29} must be carefully partitioned. From \eqref{q20}, $\beta$ is a function of $\gamma_E$. The four critical points of $\gamma_E$ that delineate the three cases are determined by solving the following two equations:
	\begin{align}
	\gamma_E=&v,\label{q30}\\
	\gamma_E=&u.\label{q31}
	\end{align}
	Substituting $\omega=\sqrt{m/(2\pi\beta(\beta+2))}$ and $v=\beta-\frac{1}{2\omega}$ into \eqref{q30}, we obtain
	\begin{align}\label{q32}
	(2m-\pi)\beta^2-(4m\gamma_E-2\pi)\beta+2m\gamma_E^2=0.
	\end{align}
	Since $\beta$ is a monotonically increasing function of $\gamma_E$ \cite{LaiX}, the two solutions of $\gamma_E$ to \eqref{q32}, denoted as $\tau_1$ and $\tau_2$ with $\tau_1\leq \tau_2$, can be obtained via the bisection method.
	
	Similarly, substituting the expressions for $\omega$ and $u=\beta+\frac{1}{2\omega}$ into \eqref{q31} yields
	\begin{align}\label{q33}
	(2m-\pi)\beta^2+(4m\gamma_E-2\pi)\beta+2m\gamma_E^2=0.
	\end{align}
	The two solutions of $\gamma_E$ to \eqref{q33} are denoted as $\tau_3$ and $\tau_4$ with $\tau_3\leq \tau_4$, which can also be found via the bisection method.

	Based on these critical points, the integration interval in \eqref{q29} for the three cases is determined as follows.
	
	\textit{Case 1:} When $y\leq v$, i.e., $y\leq \beta-\frac{1}{2\omega}$, the integration interval is
	\begin{align}
	\Theta_1=[\tau_1,\tau_2].
	\end{align}
	
	\textit{Case 2:} When $v<y< u$, i.e., $\beta-\frac{1}{2\omega}<y<\beta+\frac{1}{2\omega}$, the integration interval is
	\begin{align}
	\Theta_2=\left(\left((-\infty,\tau_1]\cup [\tau_2, \infty)\right)\cap [\tau_3,\tau_4]\right).
	\end{align}
	
	\textit{Case 3:} When $y\geq u$, the contribution to $\mathbb{E}_{\gamma _{R},\gamma _{E}} [\epsilon]$ is zero.

	Substituting the integration intervals and the corresponding $\mathbb{E}[\hat{\epsilon}]$ for each case into \eqref{q29}, we apply the Gauss-Chebyshev quadrature to evaluate the outer integral.
	
	For the $i$-th case, $i\in\{1,2\}$, let $\mathcal{H}_i$ and $\mathcal{L}_i$ denote the upper and lower limits of the integration interval, respectively, and let $\mathcal{Y}_{i}(y)$ denote $\mathbb{E}[\hat{\epsilon}]$ as a function of $\gamma_E$ in the $i$-th case. Then, $\mathbb{E}_{\gamma _{R},\gamma _{E}} [\epsilon]$ is computed as
	\begin{align}
	\mathbb{E}_{\gamma _{R},\gamma _{E}} \left [\epsilon\right]_i =\frac{\pi}{U_l}\frac{\mathcal{H}_i-\mathcal{L}_i}{2}\sum_{l=1}^{U_l}\sqrt{1-t_l^2}\mathcal{Y}_{i}(y_l)f_{\gamma_E}(y_l),
	\end{align}
	where $U_l$ is the number of quadrature points controlling the complexity-accuracy tradeoff, and
	\begin{align} 
	t_{l}&=\cos \left ({\frac {(2l-1)\pi }{2U_l}}\right ),\\ 
	y_{l}&=\frac {\mathcal{H}_i-\mathcal{L}_i}{2}t_{l}+\frac {\mathcal{H}_i+\mathcal{L}_i}{2}.
	\end{align}
	
	Accordingly, the AAST is given by
	\begin{align}
	T=
	\begin{cases} \frac{m}{M}\left(1-\mathbb{E}_{\gamma_{R},\gamma_{E}}[\epsilon]_i\right), 
	&\quad i\in\{1,2\}, \\[2pt]
	 \frac{m}{M}, &\quad i=3.
	  \end{cases}
	\end{align}
	
	\begin{remark}[Blocklength-Rate-Security Trade-off]\label{remark:blocklength}
	The AAST expression $T = \frac{m}{M}(1-\mathbb{E}[\epsilon])$ reveals a fundamental three-way trade-off among the transmission rate $m/M$, the decoding reliability $\mathbb{E}[\epsilon]$, and the security level $\delta$. Specifically: \\
	\emph{(i) Blocklength $M$:} Increasing $M$ reduces the decoding error probability $\epsilon$ (since the channel dispersion penalty $\sqrt{V_R/M}$ diminishes), but simultaneously decreases the effective rate $m/M$. This creates a unimodal behavior of $T$ with respect to $M$, confirming the existence of an optimal blocklength $M^*$.\\
	\emph{(ii) Information bits $m$:} For a fixed $M$, increasing $m$ raises the rate $m/M$ but also increases $\epsilon$ (as decoding becomes harder), leading to a similar trade-off and an optimal $m^*$.\\
	\emph{(iii) Secrecy leakage $\delta$:} A smaller $\delta$ enforces stricter secrecy, which increases the effective decoding threshold $\beta$ in \eqref{q20} and consequently raises $\epsilon$, reducing the throughput $T$.\\
	These trade-offs underscore the importance of jointly optimizing $M$, $m$, and $P$ in the FAS-aided secure short-packet system, as addressed in Section~IV.
	\end{remark}
	
	\subsection{Asymptotic AAST}
	To reduce computational complexity, we derive an asymptotic approximation of the AAST. Following \cite{WangH}, we relax the lower limit of the inner integral in \eqref{q18} from $y$ to $0$, yielding
	\begin{align}\label{q42}
	\mathbb{E}_{\gamma _{R},\gamma _{E}} \left [\epsilon\right] \approx\int_0^\infty f_{\gamma_E}(y)\int_0^\infty \epsilon_{\gamma_R|\gamma_E=y}f_{\gamma_R}(x)\,dx\,dy.
	\end{align}
	Substituting the linear approximation \eqref{q19} with the lower limit changed to $0$, we have
	\begin{align}\label{eq:asym_raw}
	\mathbb{E}\left [\hat{\epsilon}\right]=\underbrace{\int_0^v f_{\gamma_R}(x)\,dx}_{J_1}+\underbrace{\int_v^u \left(\frac {1}{2}-\omega(x-\beta )\right)f_{\gamma_R}(x)\,dx}_{J_2}.
	\end{align}
	The first integral evaluates to $J_1 = F_{\gamma_R}(v) - F_{\gamma_R}(0) = F_{\gamma_R}(v)$, since $F_{\gamma_R}(0) = 0$ for a non-negative random variable. For $J_2$, applying integration by parts as in \eqref{eq:I2_ibp}--\eqref{eq:I2_result} yields $J_2 = -F_{\gamma_R}(v) + \omega \int_v^u F_{\gamma_R}(x)\,dx$. Combining $J_1$ and $J_2$, the $F_{\gamma_R}(v)$ terms cancel, giving
	\begin{align}\label{eq:asym_simplified}
	\mathbb{E}\left [\hat{\epsilon}\right]=\omega\int_v^u F_{\gamma_R}(x)\,dx.
	\end{align}
	Applying the same Gauss-Chebyshev quadrature procedure as in \eqref{eq:case1_result} yields
	\begin{align}
	\mathbb{E}\left [\hat{\epsilon}\right]=\frac{\pi}{2 U_p}\sum_{p=1}^{U_p}\sqrt{1-t_p^2}F_{\gamma_R}(x_p).
	\end{align}
	Similarly, applying the Gauss-Chebyshev quadrature to the outer integral over $y$ gives
	\begin{align}
	\mathbb{E}_{\gamma _{R},\gamma _{E}} \left [\epsilon\right] =\frac{\mathcal{H}}{2}\frac{\pi}{U_l}\sum_{l=1}^{U_l}\sqrt{1-t_l^2}f_{\gamma_E}(y_l)\mathbb{E}\left [\hat{\epsilon}\right],
	\end{align}
	where $\mathcal{H}$ is a sufficiently large truncation parameter for the integration range. Consequently, the asymptotic AAST is obtained as
	\begin{align}\label{q46}
	\hat{T}= \frac{m}{M}\left(1-\frac{\mathcal{H}}{2}\frac{\pi}{U_l}\sum_{l=1}^{U_l}\sqrt{1-t_l^2}f_{\gamma_E}(y_l)\mathbb{E}\left [\hat{\epsilon}\right]\right).
	\end{align}
	
	\begin{remark}[High-SNR Behavior and Design Guidelines]\label{remark:high_snr}
	The asymptotic AAST in \eqref{q46} provides several important design insights for FAS-aided secure short-packet systems.\\
	\emph{(i) High-SNR regime:} As $\bar{\gamma}_R \to \infty$, the CDF $F_{\gamma_R}(x_p) = F_{\alpha_R}\left(\sqrt{x_p/\bar{\gamma}_R}\right) \to 0$ for any finite $x_p$, which implies $\mathbb{E}[\hat{\epsilon}] \to 0$ from \eqref{eq:asym_simplified}. Consequently, $\hat{T} \to m/M$, indicating that the AAST converges to the nominal transmission rate in the high-SNR regime. This confirms that adequate transmit power can effectively overcome the finite blocklength penalty.\\
	\emph{(ii) Power-distance scaling:} From the average SNR definitions $\bar{\gamma}_R = P d_R^{-a_R}/\sigma_R^2$ and $\bar{\gamma}_E = P d_E^{-a_E}/\sigma_E^2$, the ratio $\bar{\gamma}_R/\bar{\gamma}_E = d_E^{a_E}/d_R^{a_R} \cdot \sigma_E^2/\sigma_R^2$ is independent of $P$. This implies that increasing the transmit power alone improves both the legitimate and eavesdropping SNRs proportionally, and the secrecy advantage is fundamentally determined by the path-loss ratio $d_E^{a_E}/d_R^{a_R}$. Therefore, \emph{FAS port selection} plays a critical role in creating an additional secrecy advantage beyond path loss, as it independently enhances the legitimate channel quality through spatial diversity.\\
	\emph{(iii) Practical design guideline:} For a target secrecy throughput $T_{\mathrm{target}}$, the system designer should first select $N_R$ to ensure sufficient diversity gain over the eavesdropper (cf.\ Remark~\ref{remark:diversity}), then jointly optimize $P$ and $M$ to balance the rate-reliability trade-off (cf.\ Remark~\ref{remark:blocklength}).
	\end{remark}
	
	\section{Optimization Problem Formulation and Algorithm Design}
	
	In this section, we formulate the optimization problem with the objective of maximizing the asymptotic AAST in \eqref{q46} under practical constraints, and present a grid search (GS) algorithm to find the optimal operating points for the FAS system.
	
	\subsection{FAS Security Optimization Problem}
	
	The objective is to maximize the asymptotic AAST while satisfying practical constraints on transmit power, port number, and blocklength. This leads to the following optimization problem:
	\begin{align}\label{eq:optimization_problem}
		\max_{N_R, P, M} \quad &\hat{T}(N_R, P, M) \\
		\text{s.t.  } \quad &P \leq P_{\max}, \nonumber \\
		&N_R \leq N_{\max}, \nonumber \\
		&M \leq M_{\max}, \nonumber \\
		&\hat{T}(N_R, P, M) \geq 0, \nonumber
	\end{align}
	where $\hat{T}(N_R, P, M)$ denotes the asymptotic AAST as a function of the RU's port number $N_R$, transmit power $P$, and blocklength $M$, as derived in \eqref{q46}.
	
	The optimization problem in \eqref{eq:optimization_problem} is inherently complex due to the non-convex nature of the objective function and the discrete nature of the port selection variable $N_R$. Moreover, the presence of the eavesdropper with similar FAS capabilities creates a competitive scenario where the optimization landscape exhibits multiple local optima.
	
	To address this optimization problem, we first analyze its structural properties and then propose the GS algorithm.
	
	\begin{proposition}[Monotonicity of $\hat{T}$ with respect to $N_R$]\label{prop:monotone}
		For fixed $P$ and $M$, the asymptotic AAST $\hat{T}(N_R, P, M)$ in \eqref{q46} is monotonically non-decreasing in $N_R$.
	\end{proposition}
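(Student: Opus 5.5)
The key observation is that $N_R$ enters $\hat{T}$ in \eqref{q46} through exactly one object: the CDF $F_{\gamma_R}$. The quantities $m/M$, $\omega$, $\beta$, $v$, $u$, the quadrature nodes $x_p$, and the eavesdropper PDF $f_{\gamma_E}$ depend only on $P$, $M$, $m$, $\delta$ and the EU statistics. From \eqref{eq:asym_simplified}, before quadrature,
\begin{align}
\hat{T}=\frac{m}{M}\left(1-\int_0^{\infty} f_{\gamma_E}(y)\,\omega\int_{v}^{u}F_{\gamma_R}(x)\,dx\,dy\right).
\end{align}
Here $\omega>0$ and $f_{\gamma_E}\ge 0$. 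The same structure holds after Gauss--Chebyshev quadrature: there all weights $\sqrt{1-t_p^2}$, $\sqrt{1-t_l^2}$ and $f_{\gamma_E}(y_l)$ are nonnegative and independent of $N_R$. So $\hat{T}$ is a nonincreasing functional of $F_{\gamma_R}$. The proof therefore reduces to showing that, pointwise in $x\ge 0$, $F_{\gamma_R}(x)=F_{\alpha_R}(\sqrt{x/\bar{\gamma}_R})$ is nonincreasing in $N_R$. Since $\bar{\gamma}_R$ does not depend on $N_R$, it suffices to show that $F_{\alpha_R}(x)$ is nonincreasing in $N_R$ for every $x$.

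For that step I would use the physical definition rather than the block formula. We have $\alpha_R=\max_{k\le N_R}|g_{R,k}|$, so
\begin{align}
F_{\alpha_R}(x)=\Pr\left(|g_{R,1}|\le x,\dots,|g_{R,N_R}|\le x\right).
\end{align}
Adding a port adds one more constraint inside this event, so the probability cannot increase. I would make this rigorous on the VBCM side as well. Under the VBCM, $F_{\alpha_R}=\prod_d F_{\alpha_d}$ by \eqref{eq:cdf_final}. Increasing $N_R$ to $N_R+1$ either adds a new block, which multiplies the product by a factor $F_{\alpha_{D+1}}(x)\le 1$, or increases some $L_d$ by one. In the second case the integrand $[1-Q_1(\theta/\sigma_d,x/\sigma_d)]^{L_d}$ in \eqref{eq:block_cdf} is pointwise nonincreasing in $L_d$, because its base lies in $[0,1]$. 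Integrating against the nonnegative Rayleigh weight $\frac{2\theta}{\eta\rho_d}e^{-\theta^2/(\eta\rho_d)}$ preserves the inequality, so $F_{\alpha_d}(x)$ does not increase.

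The main obstacle is that the VBCM parameters $\{D,L_d,\rho_d\}$ are recomputed from the Toeplitz matrix when $N_R$ changes. The Jakes entries depend on $N_R$ through the spacing $W/(N_R-1)$, so the $\rho_d$'s are not simply inherited from the smaller configuration. To handle this, I would state the monotonicity under the model assumption used throughout the paper (cf.\ Remark~\ref{remark:diversity}): going from $N_R$ to $N_R+1$ keeps the existing blocks' coefficients and either appends a block or enlarges one. If a cleaner route is needed, I would instead invoke the physical-model argument above, nesting the port sets for fixed aperture $W\lambda$ and treating the VBCM as a faithful approximation of $F_{\alpha_R}$. Chaining the pieces then gives the result: $F_{\alpha_R}\downarrow$ implies $F_{\gamma_R}\downarrow$, which implies $\mathbb{E}[\hat{\epsilon}]\downarrow$, which implies $\hat{T}\uparrow$ in $N_R$.
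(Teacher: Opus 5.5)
Your main argument is the paper's proof. Under the VBCM, increasing $N_R$ either appends a block or enlarges one, so $F_{\alpha_R}=\prod_d F_{\alpha_d}$, and hence $F_{\gamma_R}$, does not increase pointwise. Therefore $\mathbb{E}[\hat{\epsilon}]=\omega\int_v^u F_{\gamma_R}(x)\,dx$ does not increase and $\hat{T}$ does not decrease.

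You are more careful than the paper in three places:
\begin{itemize}
\item The paper justifies the CDF step only by $0\le F_{\alpha_d}(x)\le 1$. That covers an appended block but not an enlarged one. Your pointwise bound on $[1-Q_1(\theta/\sigma_d,x/\sigma_d)]^{L_d}$ handles both cases.
\item You check that the Gauss--Chebyshev weights and $f_{\gamma_E}(y_l)$ in \eqref{q46} are nonnegative and independent of $N_R$. So the conclusion holds for the quadrature formula itself, not just the integral.
\item You make explicit the nesting assumption the paper uses tacitly: the existing blocks and their $\rho_d$ are kept when $N_R$ grows.
\end{itemize}

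You should drop the fallback ``physical'' route, however.
\begin{itemize}
\item With fixed aperture $W\lambda$, the port positions $kW\lambda/(N_R-1)$ and $kW\lambda/N_R$ coincide only at the two endpoints. The port sets for $N_R$ and $N_R+1$ are therefore not nested, and ``adding a port adds a constraint'' does not apply. It would apply only with fixed spacing and a growing aperture.
\item Even a valid monotonicity statement for the exact $F_{\alpha_R}$ would not transfer to the VBCM-based $\hat{T}$ in \eqref{q46}, which is what the proposition is about.
\end{itemize}
Stating the result under the nested-block model assumption, as in your main route, is the right formulation.
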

	
	\begin{proof}
		From \eqref{eq:cdf_final}, the CDF of the maximum channel amplitude at the RU is $F_{\alpha_R}(x) = \prod_{d=1}^{D} F_{\alpha_d}(x)$. When $N_R$ increases, the VBCM framework either increases the number of blocks $D$ or the block sizes $\{L_d\}$. In both cases, since each block CDF satisfies $0 \leq F_{\alpha_d}(x) \leq 1$ for $x \geq 0$, the product $F_{\alpha_R}(x)$ is non-increasing in $N_R$ for any given $x > 0$. Consequently, from \eqref{eq:cdf_snr_R}, $F_{\gamma_R}(z)$ is also non-increasing in $N_R$ for any $z > 0$. From \eqref{eq:asym_simplified}, the conditional expectation $\mathbb{E}[\hat{\epsilon}] = \omega \int_v^u F_{\gamma_R}(x)\,dx$ is non-increasing in $N_R$ since the integrand decreases. Since $\hat{T} = \frac{m}{M}\left(1 - \mathbb{E}_{\gamma_R, \gamma_E}[\epsilon]\right)$ and $\mathbb{E}_{\gamma_R, \gamma_E}[\epsilon]$ is non-increasing in $N_R$, the asymptotic AAST $\hat{T}$ is monotonically non-decreasing in $N_R$.
	\end{proof}
	
	\begin{corollary}[Dimensionality Reduction]\label{cor:dim_reduction}
		Under the optimization problem \eqref{eq:optimization_problem}, the optimal number of RU ports is $N_R^* = N_{\max}$. Therefore, the three-dimensional search over $(N_R, P, M)$ reduces to a two-dimensional search over $(P, M)$:
		\begin{align}\label{eq:reduced_problem}
			\max_{P, M} \quad &\hat{T}(N_{\max}, P, M), \quad \text{s.t. } P \leq P_{\max},\ M \leq M_{\max}.
		\end{align}
		The total number of candidate solutions reduces from $|\mathcal{S}|$ in \eqref{eq:total_candidates} to
		\begin{align}\label{eq:reduced_candidates}
			|\mathcal{S}'| = G_P \times G_M,
		\end{align}
		yielding a computational saving factor of $(N_R^{\max} - N_R^{\min} + 1)$.
	\end{corollary}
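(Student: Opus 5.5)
The plan is to reduce the corollary to Proposition~\ref{prop:monotone}, applied pointwise over the feasible set of $(P,M)$. The key structural fact is that the constraint set in \eqref{eq:optimization_problem} is a product set. Specifically, the constraints $P\le P_{\max}$ and $M\le M_{\max}$ do not involve $N_R$, and $N_R$ ranges over the integer set $\{N_R^{\min},\dots,N_{\max}\}$.

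\textbf{Step 1: the constraint $\hat T\ge 0$ does not obstruct the reduction.} If some triple $(N_R,P,M)$ is feasible, then Proposition~\ref{prop:monotone} gives $\hat T(N_{\max},P,M)\ge \hat T(N_R,P,M)\ge 0$. Hence $(N_{\max},P,M)$ is feasible as well.

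\textbf{Step 2: the optimal port number is $N_{\max}$.} Take any optimal triple $(N_R^\star,P^\star,M^\star)$. Then
\begin{align}
\hat T(N_R^\star,P^\star,M^\star)\le \hat T(N_{\max},P^\star,M^\star)\le \max_{P,M}\hat T(N_{\max},P,M),
\end{align}
where the first inequality is Proposition~\ref{prop:monotone}. The reverse inequality holds trivially, because every point feasible for \eqref{eq:reduced_problem} is feasible for \eqref{eq:optimization_problem}, by Step~1 (or directly, since $N_{\max}$ is admissible). The two optimal values therefore coincide, and $N_R^*=N_{\max}$ attains the optimum. If several maximizers exist, $N_{\max}$ is one of them, and this is the sense in which the statement should be read.

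\textbf{Step 3: counting the candidates.} The original grid $\mathcal S$ is the Cartesian product of $N_R^{\max}-N_R^{\min}+1$ port values, $G_P$ power points and $G_M$ blocklength points. Fixing $N_R=N_{\max}$ leaves $|\mathcal S'|=G_P\times G_M$ candidates. The saving factor is then the ratio $|\mathcal S|/|\mathcal S'|=N_R^{\max}-N_R^{\min}+1$, taking $N_R^{\max}\equiv N_{\max}$.

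\textbf{Main obstacle.} The only delicate point is to make sure the monotonicity in Proposition~\ref{prop:monotone} holds uniformly for every fixed $(P,M)$ on the grid, including the choice of $m$, and not only asymptotically. This is guaranteed by the proposition's pointwise statement for fixed $P$ and $M$, so I would invoke it directly rather than re-derive anything about $F_{\alpha_R}$.
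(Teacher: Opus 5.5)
Your proposal is correct and matches the paper's approach: the paper gives no separate proof, treating the corollary as an immediate consequence of Proposition~\ref{prop:monotone} applied at each fixed $(P,M)$, plus the product structure of the grid, which gives $|\mathcal{S}|/|\mathcal{S}'| = N_R^{\max}-N_R^{\min}+1$. Your two extra checks go beyond what the paper states: that the constraint $\hat T\ge 0$ is preserved when moving to $N_{\max}$, and that under non-strict monotonicity $N_{\max}$ is \emph{an} optimizer rather than necessarily the unique one.
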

	
	\begin{remark}[Problem Structure]\label{remark:structure}
		The optimization problem \eqref{eq:optimization_problem} is a mixed-integer non-convex program, since $N_R$ is discrete and $\hat{T}$ is generally non-convex in $(P, M)$ due to the composite structure involving products of CDFs and Gauss-Chebyshev sums. However, Proposition~\ref{prop:monotone} resolves the integer component, and the reduced problem \eqref{eq:reduced_problem} involves only two continuous variables, making the GS algorithm particularly efficient.
	\end{remark}
	
	\subsection{Grid Search Optimization}
	
	The GS algorithm provides a systematic approach to explore the entire feasible parameter space, guaranteeing global optimality at the cost of computational complexity.
	
	Specifically, the GS method discretizes problem \eqref{eq:optimization_problem} into a finite set of candidate solutions. For the transmit power $P$ and the blocklength $M$, we construct uniform grids as
	\begin{align}
		\mathcal{P} = &\left\{P^{\min} + j \cdot \frac{P^{\max} - P^{\min}}{G_P-1} : j = 0, 1, \ldots, G_P-1\right\},\label{eq:power_grid}\\
		\mathcal{M} = &\left\{M^{\min} + t \cdot \frac{M^{\max} - M^{\min}}{G_M-1} : t = 0, 1, \ldots, G_M-1\right\},\label{eq:BL_grid}
	\end{align}
	where $G_P$ and $G_M$ denote the grid resolutions for $P$ and $M$, respectively. The corresponding grid spacings are
	\begin{align}\label{eq:grid_spacing}
		\Delta P = \frac{P^{\max} - P^{\min}}{G_P-1},\quad
		\Delta M = \frac{M^{\max} - M^{\min}}{G_M-1}.
	\end{align} 
	For the discrete port parameter $N_R$, the search space is naturally defined as
	\begin{align}\label{eq:port_grid}
		\mathcal{N} = \{N_R^{\min}, N_R^{\min}+1, \ldots, N_R^{\max}\}.
	\end{align}
	Accordingly, the total number of candidate solutions is
	\begin{align}\label{eq:total_candidates}
		|\mathcal{S}| = |\mathcal{P}| \times |\mathcal{M}|\times |\mathcal{N}| = G_P \times G_M \times (N_R^{\max} - N_R^{\min} + 1).
	\end{align}
	For each candidate $(N_R, P, M) \in \mathcal{N} \times \mathcal{P} \times \mathcal{M}$, the asymptotic AAST is computed using \eqref{q46}. The GS algorithm seeks the global optimum, which is expressed as
	\begin{align}\label{eq:grid_optimization}
		(N_R^*, P^*, M^*) = \arg\max_{(N_R, P, M) \in \mathcal{N} \times \mathcal{P}\times \mathcal{M}} \hat{T}(N_R, P, M).
	\end{align}
	Equivalently, the optimization can be decomposed as
	\begin{align}\label{eq:grid_process}
		\hat{T}^* &= \max_{N_R \in \mathcal{N}} \max_{P \in \mathcal{P}}\max_{M \in \mathcal{M}} \hat{T}(N_R, P, M) \nonumber\\
		&= \max_{i=1,\ldots,|\mathcal{N}|} \max_{j=1,\ldots,G_P} \max_{t=1,\ldots,G_M}\hat{T}(N_R^{(i)}, P^{(j)}, M^{(t)}),
	\end{align}
	where $N_R^{(i)}$, $P^{(j)}$, and $M^{(t)}$ denote the $i$-th, $j$-th, and $t$-th grid points, respectively. The overall GS algorithm is summarized in Algorithm~\ref{alg:grid_search}.
	
	\begin{algorithm}[h!]
		\caption{Grid Search Optimization for FAS SP Security}
		\label{alg:grid_search}
		\begin{algorithmic}[1]
			\Require $N_E$, $W$, $\eta_R$, $\eta_E$, $\sigma_R^2$, $\sigma_E^2$, $P^{\min}$, $P^{\max}$, $N_R^{\min}$, $N_R^{\max}$, $M^{\min}$, $M^{\max}$, $G_P$, $G_M$
			\Ensure Optimal $(N_R^*, P^*, M^*)$ and maximum asymptotic AAST $\hat{T}^*$
			\State Initialize $\hat{T}^* \leftarrow 0$, $N_R^* \leftarrow N_R^{\min}$, $P^* \leftarrow P^{\min}$, $M^* \leftarrow M^{\min}$
			\State Construct power grid $\mathcal{P}$ using \eqref{eq:power_grid}
			\State Construct blocklength grid $\mathcal{M}$ using \eqref{eq:BL_grid}
			\State Construct port grid $\mathcal{N}$ using \eqref{eq:port_grid}
			\For{$N_R \in \mathcal{N}$}
			\For{$P \in \mathcal{P}$}
			\For{$M \in \mathcal{M}$}
			\State Compute $\hat{T}(N_R, P, M)$ using \eqref{q46}
			\If{$\hat{T}(N_R, P, M)> \hat{T}^*$}
			\State $\hat{T}^* \leftarrow \hat{T}(N_R, P, M)$
			\State $N_R^* \leftarrow N_R$, $P^* \leftarrow P$, $M^* \leftarrow M$
			\EndIf
			\EndFor
			\EndFor
			\EndFor
			\State \Return $(N_R^*, P^*, M^*, \hat{T}^*)$
		\end{algorithmic}
	\end{algorithm}

	The GS algorithm achieves global optimality in the discrete sense. As the grid resolution increases, the approximation error decreases. By the Lipschitz continuity of $\hat{T}$ with respect to both $P$ and $M$, the overall grid approximation error is bounded by
	\begin{align}\label{eq:grid_error}
		\epsilon_{\mathrm{grid}} = \left|\hat{T}^{\mathrm{true}} - \hat{T}^*\right| \leq L_P \cdot \frac{\Delta P}{2} + L_M \cdot \frac{\Delta M}{2},
	\end{align}
	where $L_P$ and $L_M$ denote the Lipschitz constants of $\hat{T}$ with respect to $P$ and $M$, respectively, and $\hat{T}^{\mathrm{true}}$ is the true continuous optimum. Substituting \eqref{eq:grid_spacing} into \eqref{eq:grid_error}, the error can be expressed in terms of the grid resolutions as
	\begin{align}\label{eq:grid_error_explicit}
		\epsilon_{\mathrm{grid}} \leq \frac{L_P(P^{\max} - P^{\min})}{2(G_P - 1)} + \frac{L_M(M^{\max} - M^{\min})}{2(G_M - 1)}.
	\end{align}
	For a given target accuracy $\epsilon_{\mathrm{target}} > 0$, by allocating equal error budgets to each dimension, the minimum grid resolutions are
	\begin{align}\label{eq:min_grid_P}
		G_P^{\min} &= 1 + \left\lceil \frac{L_P(P^{\max} - P^{\min})}{\epsilon_{\mathrm{target}}} \right\rceil,\\
		G_M^{\min} &= 1 + \left\lceil \frac{L_M(M^{\max} - M^{\min})}{\epsilon_{\mathrm{target}}} \right\rceil, \label{eq:min_grid_M}
	\end{align}
	where $\lceil \cdot \rceil$ denotes the ceiling function. Equations \eqref{eq:min_grid_P}--\eqref{eq:min_grid_M} provide a principled way to select the grid resolution based on the desired accuracy.
	
	The computational complexity of the GS algorithm is $\mathcal{O}(G_P\cdot G_M \cdot (N_R^{\max} - N_R^{\min} + 1))$, where each objective function evaluation requires numerical integration with complexity $\mathcal{O}(U_p \cdot U_l)$. By leveraging Corollary~\ref{cor:dim_reduction}, the reduced complexity becomes $\mathcal{O}(G_P \cdot G_M)$. The total computational cost is
	\begin{align}\label{eq:grid_complexity}
		\mathcal{T}_{\mathrm{grid}} = G_P\cdot G_M \cdot U_p \cdot U_l \cdot \mathcal{T}_{\mathrm{eval}},
	\end{align}
	where $\mathcal{T}_{\mathrm{eval}}$ denotes the cost of evaluating the VBCM channel statistics. Substituting \eqref{eq:min_grid_P}--\eqref{eq:min_grid_M}, the minimum computational cost to achieve accuracy $\epsilon_{\mathrm{target}}$ scales as
	\begin{align}\label{eq:min_cost}
		\mathcal{T}_{\mathrm{grid}}^{\min} = \mathcal{O}\!\left(\frac{L_P L_M (P^{\max}\!-\!P^{\min})(M^{\max}\!-\!M^{\min})}{\epsilon_{\mathrm{target}}^2} \cdot U_p U_l \right)\!.
	\end{align}
	This reveals a fundamental accuracy-complexity trade-off: halving the target error requires quadrupling the computational effort.

	\subsection{Theoretical Convergence Analysis}
	
	\begin{theorem}[Grid Search Optimality]
		Algorithm~\ref{alg:grid_search} converges to the global optimum of problem \eqref{eq:optimization_problem} as the grid resolutions $G_P \to \infty$ and $G_M \to \infty$.
	\end{theorem}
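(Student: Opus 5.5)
The plan is a standard sandwich argument on a compact feasible set. First fix $N_R$. By Proposition~\ref{prop:monotone} and Corollary~\ref{cor:dim_reduction}, the global optimum is attained at $N_R^*=N_{\max}$, and $N_{\max}$ always lies in the finite port grid $\mathcal{N}$. The discrete dimension is therefore enumerated exactly, and only the continuous dimensions $(P,M)$ need a limiting argument. The feasible box $\mathcal{B}=[P^{\min},P^{\max}]\times[M^{\min},M^{\max}]$ is compact. I will show that $\hat{T}(N_{\max},\cdot,\cdot)$ is continuous on $\mathcal{B}$. This follows from \eqref{q46}: the expression is a finite Gauss--Chebyshev sum of compositions of continuous maps. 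Specifically, $\beta$, $\omega$, and $x_p$ depend continuously on $(P,M)$ through \eqref{q20}, where $\beta+2>0$ keeps $\omega$ finite. Likewise $F_{\gamma_R}$ and $f_{\gamma_E}$ depend continuously on $\bar{\gamma}_R$ and $\bar{\gamma}_E$, which are linear in $P$. By the extreme value theorem, a maximizer $(P^{\mathrm{true}},M^{\mathrm{true}})$ exists with value $\hat{T}^{\mathrm{true}}$. The constraint $\hat{T}\ge 0$ does not create a problem. If it is inactive, nothing changes. If the feasible region is nonempty, the optimum is nonnegative, so intersecting $\mathcal{B}$ with $\{\hat{T}\ge 0\}$ leaves the maximizer unchanged.

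Next comes the two-sided bound. Every grid point is feasible, so $\hat{T}^*\le \hat{T}^{\mathrm{true}}$ trivially. For the reverse direction, the uniform grids in \eqref{eq:power_grid}--\eqref{eq:BL_grid} contain a point $(P^{(j)},M^{(t)})$ with $|P^{(j)}-P^{\mathrm{true}}|\le \Delta P/2$ and $|M^{(t)}-M^{\mathrm{true}}|\le \Delta M/2$. Then $\hat{T}^{\mathrm{true}}-\hat{T}^*\le \hat{T}^{\mathrm{true}}-\hat{T}(N_{\max},P^{(j)},M^{(t)})$. Under the Lipschitz assumption already used for \eqref{eq:grid_error}, this is at most $L_P\Delta P/2+L_M\Delta M/2$, which is exactly \eqref{eq:grid_error_explicit}. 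The bound tends to $0$ as $G_P,G_M\to\infty$, because $\Delta P$ and $\Delta M$ scale as $1/(G-1)$. Hence $\hat{T}^*\to\hat{T}^{\mathrm{true}}$. If Lipschitz constants are unavailable, uniform continuity of $\hat{T}$ on the compact set $\mathcal{B}$ gives the same conclusion: for every $\varepsilon>0$ there is a $\delta$ such that any grid with spacing below $\delta$ yields an error of at most $\varepsilon$.

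Finally, I will address convergence of the maximizers, not just the values. Take any sequence of grid optima $(P^*_G,M^*_G)$. By compactness it has a convergent subsequence, and by continuity every limit point attains $\hat{T}^{\mathrm{true}}$, so every limit point is a global maximizer. If the maximizer is unique, the whole sequence converges to it.

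The main obstacle is justifying continuity, and ideally Lipschitz continuity, in $M$ uniformly on $\mathcal{B}$. The difficulty is that $\beta$ enters $\omega$ as $\sqrt{m/(2\pi\beta(\beta+2))}$, which blows up as $\beta\to 0$. I would first show that $\beta$ is bounded away from $0$ on $\mathcal{B}$ using \eqref{q20}: since $\gamma_E\ge0$ and $M/m>0$, we get $\beta\ge e^{(M^{\min}/m)\ln 2}-1>0$. A bounded derivative of $\hat{T}$ then gives explicit constants $L_P$ and $L_M$. I would also note that the blocklength $M$ is an integer in practice. In that case the $M$-grid becomes exhaustive once $\Delta M<1$, and the limit argument is needed only in $P$.
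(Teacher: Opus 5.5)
Your proposal is correct and follows essentially the paper's route. You take a grid point within $\Delta P/2$ and $\Delta M/2$ of a global maximizer, bound the gap by $L_P\Delta P/2+L_M\Delta M/2$ via Lipschitz continuity, and let $G_P,G_M\to\infty$. The discrete $N_R$ is covered exactly because $\mathcal{N}$ is enumerated in full, so your appeal to Proposition~\ref{prop:monotone} to fix $N_R=N_{\max}$ is harmless but not needed. Your additions make explicit what the paper simply assumes: existence of a maximizer via compactness and continuity, the trivial bound $\hat{T}^*\le\hat{T}^{\mathrm{true}}$, the uniform-continuity fallback, and the lower bound on $\beta$ that keeps $\omega$ finite.
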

	
	\begin{proof}
		Let $\mathcal{S}^* = \arg\max_{(N_R, P, M) \in \mathcal{X}} \hat{T}(N_R, P, M)$ denote the set of global optima, where $\mathcal{X}$ is the feasible region. For any $\epsilon_P > 0$ and $\epsilon_M > 0$, there exist grid resolutions $G_{\epsilon_P}$ and $G_{\epsilon_M}$ such that for all $G_P \geq G_{\epsilon_P}$ and $G_M \geq G_{\epsilon_M}$, the grid spacings satisfy
		\begin{align}
			\Delta P = \frac{P^{\max} - P^{\min}}{G_P-1} \leq \frac{\epsilon_P}{L_P},\quad
			\Delta M = \frac{M^{\max} - M^{\min}}{G_M-1} \leq \frac{\epsilon_M}{L_M},
		\end{align}
		where $L_P$ and $L_M$ are the Lipschitz constants of $\hat{T}$ with respect to $P$ and $M$, respectively.
		
		For any global optimum $(N_R^*, P^*, M^*) \in \mathcal{S}^*$, there exists a grid point $(N_R^{(i)}, P^{(j)}, M^{(t)})$ such that $|P^* - P^{(j)}| \leq \Delta P/2$ and $|M^* - M^{(t)}| \leq \Delta M/2$. By Lipschitz continuity, 
		\begin{align}
			&|\hat{T}(N_R^*, P^*, M^*) - \hat{T}(N_R^*, P^{(j)}, M^{(t)})|  \nonumber\\
			&\leq  L_P  \cdot \frac{\Delta P}{2} + L_M \cdot \frac{\Delta M}{2}\nonumber\\
			 &\leq  \frac{\epsilon_P + \epsilon_M}{2}.
		\end{align}
		
		Since the GS evaluates all grid points, the maximum over the grid approaches the global maximum as $G_P\to \infty$ and $G_M\to \infty$. Therefore,
		\begin{align}
			\lim_{G_P \to \infty, G_M\to \infty} \hat{T}^{\mathrm{grid}} = \max_{(N_R, P, M) \in \mathcal{X}} \hat{T}(N_R, P, M).
		\end{align}
	\end{proof}
	
	\begin{remark}[Practical Implementation Guidelines]\label{remark:practical}
		The analytical results from Section~III can be leveraged to improve the practical efficiency of Algorithm~\ref{alg:grid_search} in two ways.\\
		\emph{(i) Search range reduction:} From Remark~\ref{remark:high_snr}, the power-distance scaling law $\bar{\gamma}_R/\bar{\gamma}_E = d_E^{a_E}/d_R^{a_R} \cdot \sigma_E^2/\sigma_R^2$ suggests that a meaningful lower bound on $P$ can be determined by requiring $\bar{\gamma}_R > \bar{\gamma}_E$, i.e., $P > \sigma_R^2 d_R^{a_R} / (\sigma_E^2 d_E^{a_E})$, below which the secrecy throughput is negligible. This narrows the search range $[P^{\min}, P^{\max}]$ and reduces $G_P^{\min}$ in \eqref{eq:min_grid_P}. Similarly, the unimodal behavior of $\hat{T}$ with respect to $M$ (cf.\ Remark~\ref{remark:blocklength}) implies that the search range for $M$ can be centered around the approximate optimal point.\\
		\emph{(ii) Adaptive grid refinement:} A coarse-to-fine strategy can be employed, where an initial coarse grid identifies the promising region, followed by a refined grid in the neighborhood of the coarse optimum. This reduces the effective computational cost from $\mathcal{O}(1/\epsilon_{\mathrm{target}}^2)$ in \eqref{eq:min_cost} to $\mathcal{O}(\log^2(1/\epsilon_{\mathrm{target}}))$ while maintaining the global convergence guarantee of Theorem~1.
	\end{remark}
	
	\section{Numerical Results and Analysis}
	
	In this section, numerical results are provided to validate the accuracy of the derived analytical AAST expressions and to evaluate the performance of the proposed GS optimization algorithm under various system configurations.
	
	The system operates at a carrier frequency of $f_c = 2.4$ GHz with wavelength $\lambda = c/f_c = 0.125$ m. The antenna configuration employs a one-dimensional linear fluid antenna with half-wavelength port spacing $\lambda/2$, utilizing the maximum channel gain port selection strategy. The channel model parameters are set as follows: average channel power gains $\eta_R = \eta_E = 1.0$, noise variances $\sigma_R^2 = \sigma_E^2 = 1.0$, and spatial correlation following the Jakes model. The distances from the BS to the RU and the EU are $d_R = 10$ m and $d_E = 100$ m, respectively, and the path-loss exponents are $a_R = a_E = 2$.
	
	For accurate evaluation of the derived analytical expressions, the numerical integration parameters are set as: truncation parameter $\mathcal{H} = 10\sqrt{\eta_R}$, outer quadrature points $U_p = 100$, and inner quadrature points $U_l = 100$.
	
	The GS algorithm is configured with parameter ranges $N_R \in [5, 20]$, $P \in [0.1, 15]$ W, and $M \in [100, 1000]$, subject to the constraint $N_R + N_E \leq 40$. Monte Carlo simulations with $10^5$ independent channel realizations are performed for validation.

	\begin{figure}[t]  \centering
		\includegraphics[width=3.8in]{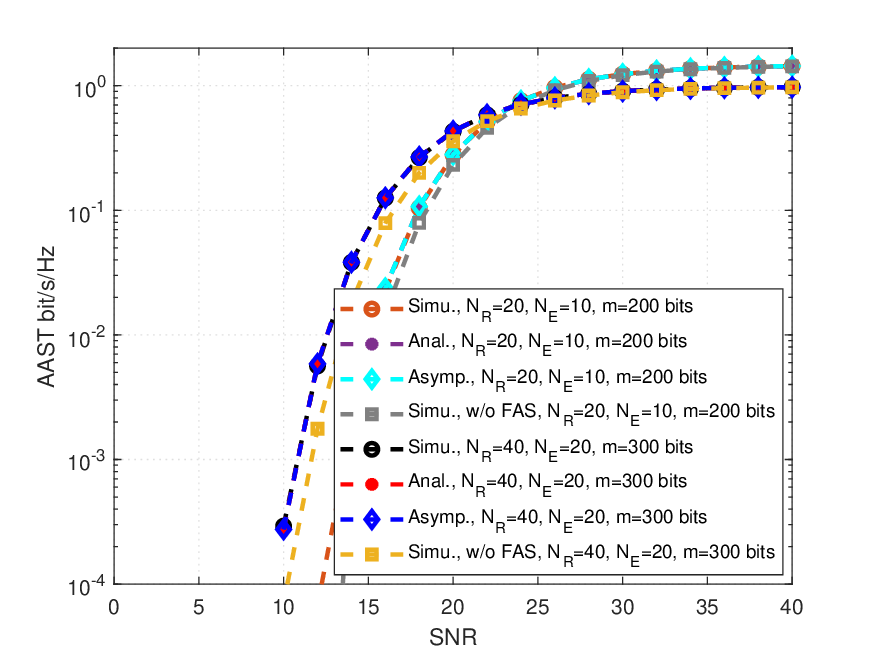}
		\caption {AAST versus SNR, where $m= 200$ bits, $N_R=20$, $N_E=10$, and $m= 300$ bits, $N_R=40$, $N_E=20$, respectively.}\label{AASTvsSNR}
	\end{figure}

	Fig.~\ref{AASTvsSNR} illustrates the AAST as a function of SNR under two system configurations: (i) $m=200$ bits, $N_R=20$, $N_E=10$, and (ii) $m=300$ bits, $N_R=40$, $N_E=20$. Several key observations can be drawn. First, the analytical results (``Anal.'') and the Monte Carlo simulation results (``Simu.'') are in excellent agreement across the entire SNR range for both configurations, thereby validating the accuracy of the derived closed-form AAST expressions. Second, the asymptotic approximation (``Asymp.'') closely tracks the exact analytical curves, particularly in the medium-to-high SNR regime (SNR~$\geq 15$~dB), confirming the tightness of the asymptotic expression in \eqref{q46}. Third, the FAS-aided system achieves a substantial AAST gain over the conventional fixed-position antenna system (``w/o FAS''), with an improvement of approximately one order of magnitude in the mid-SNR region ($15$--$25$~dB). This significant gap underscores the effectiveness of FAS port selection in exploiting spatial diversity to enhance secrecy performance, consistent with the analysis in Remark~\ref{remark:diversity}. Fourth, comparing the two configurations, the system with larger antenna arrays ($N_R=40$, $N_E=20$) achieves a higher AAST at high SNR values despite transmitting more bits ($m=300$), demonstrating that the diversity gain from increased $N_R$ can more than compensate for the higher decoding burden. Finally, the AAST monotonically increases with SNR and converges to a saturation level at high SNR, which is consistent with the high-SNR limiting behavior $\hat{T} \to m/M$ analyzed in Remark~\ref{remark:high_snr}.

	\begin{figure}[t]  \centering
		\includegraphics[width=3.8in]{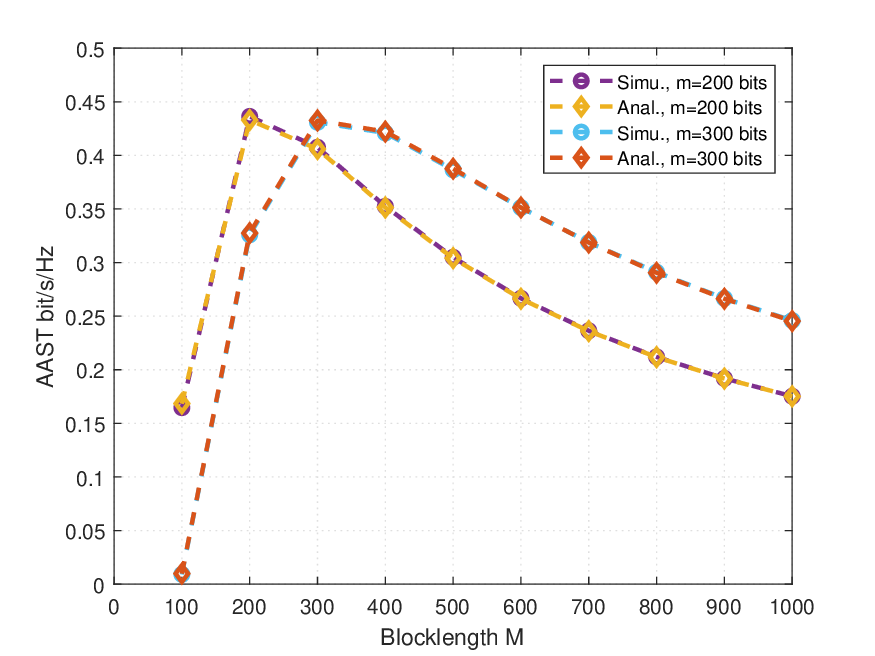}
		\caption { AAST vs Blocklength $M$, where $N_R=40$, $N_E=20$, and SNR =20 dB.}\label{AASTvsM}
	\end{figure}

	Fig.~\ref{AASTvsM} presents the AAST as a function of the blocklength $M$ for $N_R=40$, $N_E=20$, SNR $= 20$~dB, with $m=200$ bits and $m=300$ bits, respectively. The analytical results (``Anal.'') closely match the simulation results (``Simu.'') for both values of $m$, further validating the derived expressions. A distinct unimodal behavior is observed: the AAST first increases rapidly with $M$, reaches a peak, and then gradually decreases. Specifically, for $m=200$ bits, the optimal blocklength is approximately $M^* \approx 200$--$250$, yielding a peak AAST of about $0.44$~bit/s/Hz; for $m=300$ bits, the optimal blocklength shifts to $M^* \approx 250$--$300$, with a comparable peak AAST of about $0.43$~bit/s/Hz. This unimodal behavior confirms the fundamental blocklength-rate trade-off analyzed in Remark~\ref{remark:blocklength}: when $M$ is small, the channel dispersion penalty $\sqrt{V_R/M}$ is large, resulting in a high decoding error probability and low throughput; as $M$ increases beyond the optimal point, the effective transmission rate $m/M$ decreases, which dominates and reduces the throughput. Furthermore, the optimal $M^*$ increases with $m$, since transmitting more information bits requires a longer blocklength to maintain a low decoding error probability. At large blocklengths ($M = 1000$), the $m=300$ configuration retains a higher AAST ($\approx 0.25$~bit/s/Hz) than the $m=200$ case ($\approx 0.17$~bit/s/Hz), because the rate $m/M$ is higher for larger $m$ when $M$ is fixed. These results strongly motivate the inclusion of $M$ as an optimization variable in the formulated problem \eqref{eq:optimization_problem}.

	\begin{figure}[t]  \centering
		\includegraphics[width=3.8in]{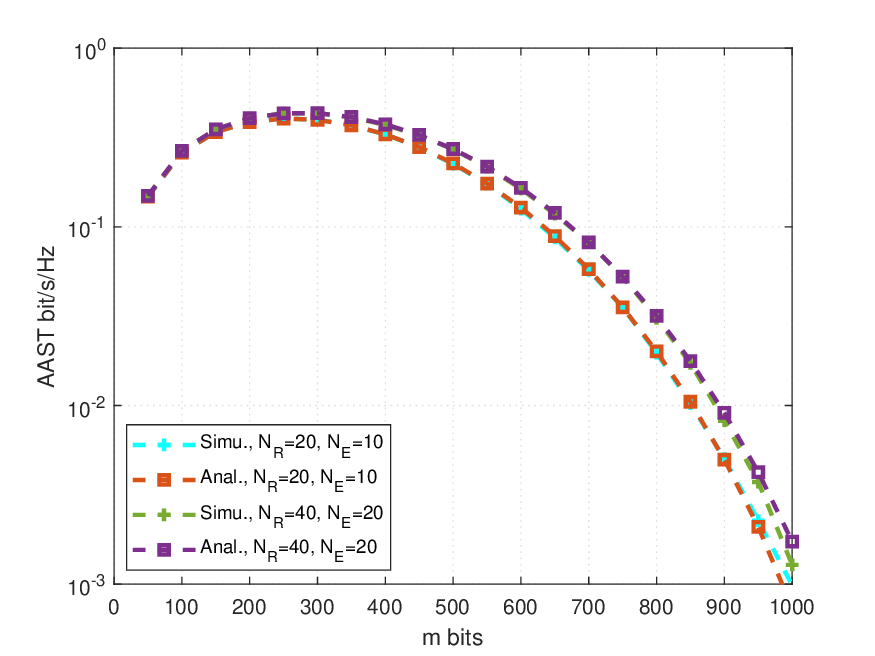}
		\caption { AAST vs transmit bits $m$, where SNR=20 dB, and $N_R=20$, $N_E=10$, and $N_R=40$, $N_E=20$, respectively.}\label{AASTvsm}
	\end{figure}
	
	Fig.~\ref{AASTvsm} investigates the impact of the number of transmit bits $m$ on the AAST for two configurations: $(N_R=20, N_E=10)$ and $(N_R=40, N_E=20)$, with SNR $= 20$~dB. Similar to the blocklength trade-off observed in Fig.~\ref{AASTvsM}, the AAST exhibits a unimodal behavior with respect to $m$: it first increases sharply, reaches a peak, and then decreases as $m$ grows further. For the $(N_R=20, N_E=10)$ configuration, the peak AAST of approximately $0.35$~bit/s/Hz is achieved around $m \approx 300$ bits, while the $(N_R=40, N_E=20)$ configuration attains a slightly higher peak of approximately $0.38$~bit/s/Hz around $m \approx 350$ bits. This confirms the trade-off discussed in Remark~\ref{remark:blocklength}(ii): increasing $m$ raises the effective rate $m/M$ but simultaneously increases the decoding error probability $\epsilon$, and the balance between these two effects determines the optimal $m^*$. Notably, the two configurations yield comparable peak AAST values despite the $(N_R=40, N_E=20)$ case having twice the number of ports, because the eavesdropper's capability also doubles, largely offsetting the RU's diversity gain. Furthermore, when $m$ exceeds approximately $700$ bits, the AAST drops sharply and approaches zero near $m = 1000$ bits, indicating that attempting to transmit too many information bits within a finite blocklength leads to severe decoding failure. The analytical and simulation results remain in close agreement throughout, further validating the derived expressions.

	\begin{figure}[t]  \centering
		\includegraphics[width=3.8in]{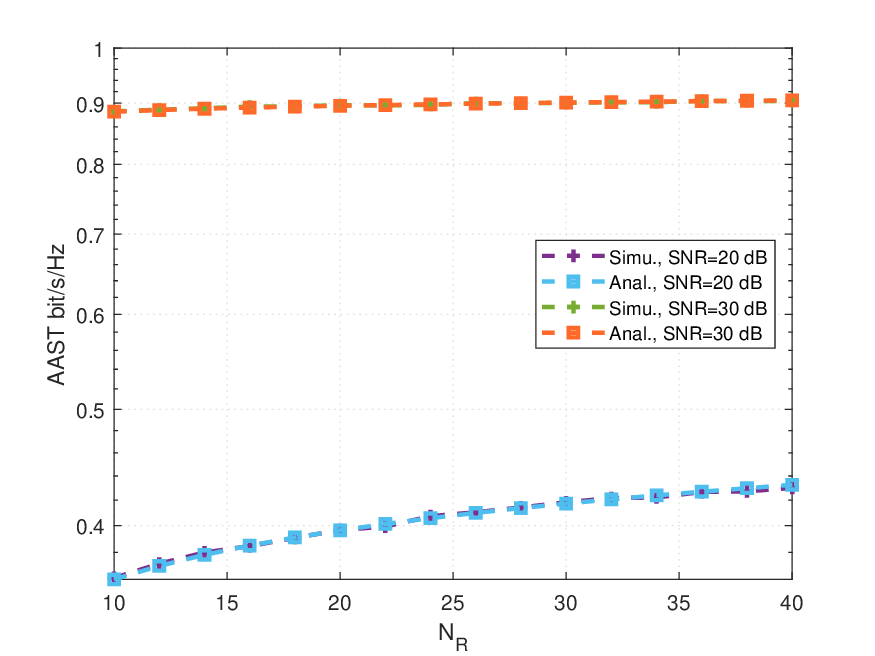}
		\caption { AAST vs  $N_R$, where $m= 300$ bits,  and $N_E=20$.}\label{AASTvsNR}
	\end{figure}
	
	Fig.~\ref{AASTvsNR} depicts the AAST as a function of the number of RU ports $N_R$ for $m = 300$ bits and $N_E = 20$, under two SNR levels: $20$~dB and $30$~dB. Several important observations emerge. First, the AAST is monotonically non-decreasing in $N_R$ for both SNR levels, which is consistent with Proposition~\ref{prop:monotone} and confirms that increasing the number of FAS ports always benefits the secrecy throughput. This also validates Corollary~\ref{cor:dim_reduction}, which states that the optimal port number is $N_R^* = N_{\max}$. Second, the AAST at SNR $= 30$~dB is significantly higher than at SNR $= 20$~dB across all values of $N_R$. Specifically, at SNR $= 30$~dB, the AAST remains above $0.88$~bit/s/Hz and is relatively flat, indicating that the system operates near the saturation regime where $\hat{T} \to m/M$. In contrast, at SNR $= 20$~dB, the AAST increases from approximately $0.38$~bit/s/Hz at $N_R = 10$ to about $0.45$~bit/s/Hz at $N_R = 40$, demonstrating that port diversity provides a more pronounced improvement in the moderate-SNR regime where the system has not yet saturated. Third, the analytical and simulation results are in excellent agreement for both SNR levels, further corroborating the accuracy of the derived AAST expressions.

	\begin{remark}[Asymmetric Port Advantage and Operating Regime]\label{remark:port_regime}
		The numerical results from Figs.~\ref{AASTvsSNR}--\ref{AASTvsNR} reveal two complementary design principles for FAS-aided secure short-packet systems. First, the secrecy gain of FAS is determined by the \emph{port count asymmetry} $N_R - N_E$, rather than the absolute port numbers. This is evidenced by Fig.~\ref{AASTvsm}, where proportionally scaling both $N_R$ and $N_E$ (from $(20, 10)$ to $(40, 20)$) yields only a marginal AAST improvement, whereas Fig.~\ref{AASTvsNR} shows that increasing $N_R$ alone (with $N_E$ fixed) produces a consistent and significant gain. This observation aligns with the analytical insight in Remark~\ref{remark:diversity}: the secrecy throughput is governed by the channel asymmetry between the RU and the EU, which is enhanced by a larger $N_R/N_E$ ratio. Second, from Figs.~\ref{AASTvsSNR} and \ref{AASTvsNR}, the benefit of FAS port diversity is most pronounced in the \emph{moderate-SNR regime} (e.g., $15$--$25$~dB), where both the FAS-versus-FPA gap (Fig.~\ref{AASTvsSNR}) and the AAST slope with respect to $N_R$ (Fig.~\ref{AASTvsNR}) are maximized. At high SNR, the system saturates at $\hat{T} \to m/M$ regardless of $N_R$, and the diversity gain diminishes. These findings suggest that in practical deployments, system designers should maximize the port count ratio $N_R/N_E$ and allocate FAS resources preferentially to moderate-SNR users where the secrecy gain per additional port is highest.
	\end{remark}
	
\begin{figure*}[t]  \centering
		\includegraphics[width=7in]{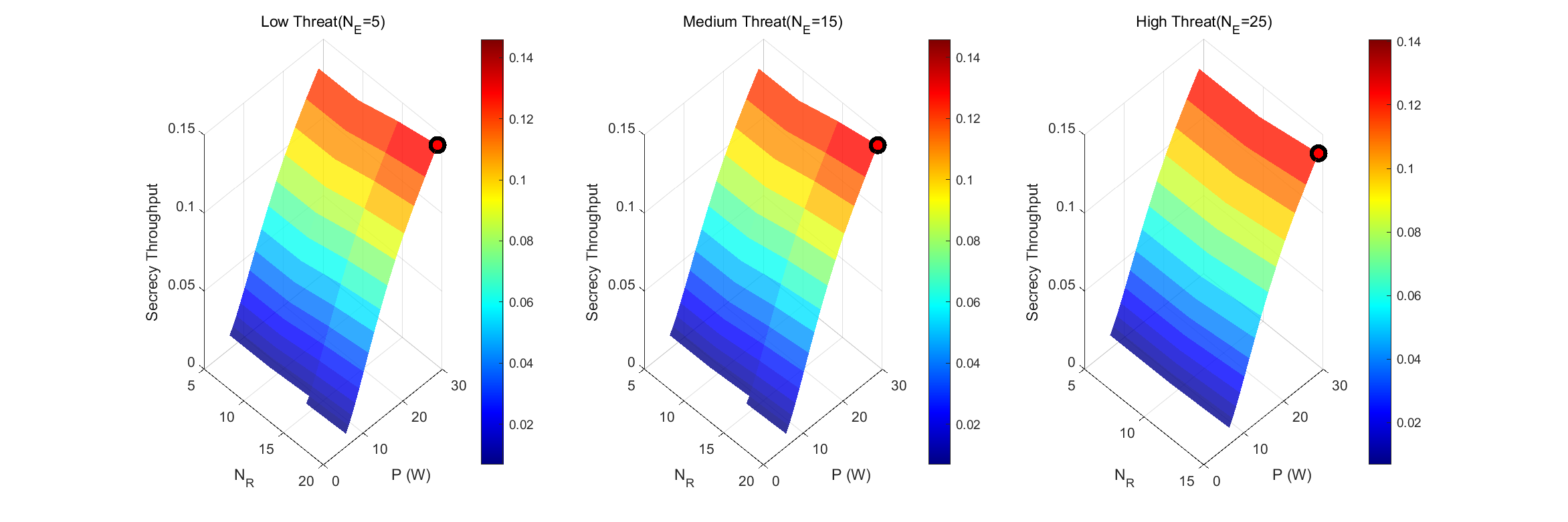}
		\caption{FAS Security Optimization Under Varying Threat Scenarios: 3D Secrecy Throughput Surface Analysis vs $P$ and $N_R$ for Different Eavesdropper Capabilities ($N_E = 5, 15, 25$).}\label{threat_ana1}
	\end{figure*}

	\begin{figure*}[t]  \centering
		\includegraphics[width=7in]{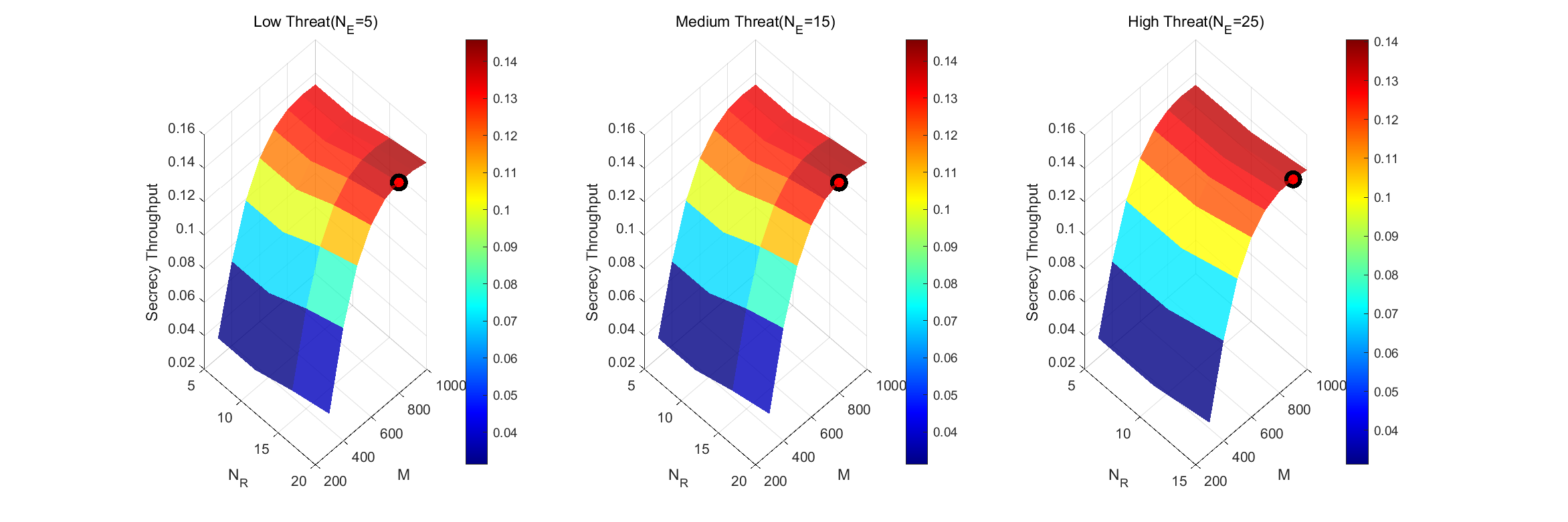}
		\caption{FAS Security Optimization Under Varying Threat Scenarios: 3D Secrecy Throughput Surface Analysis vs $M$ and $N_R$ for Different Eavesdropper Capabilities ($N_E = 5, 15, 25$).}\label{threat_ana2}
	\end{figure*}
	
	\begin{figure*}[t]  \centering
		\includegraphics[width=7in]{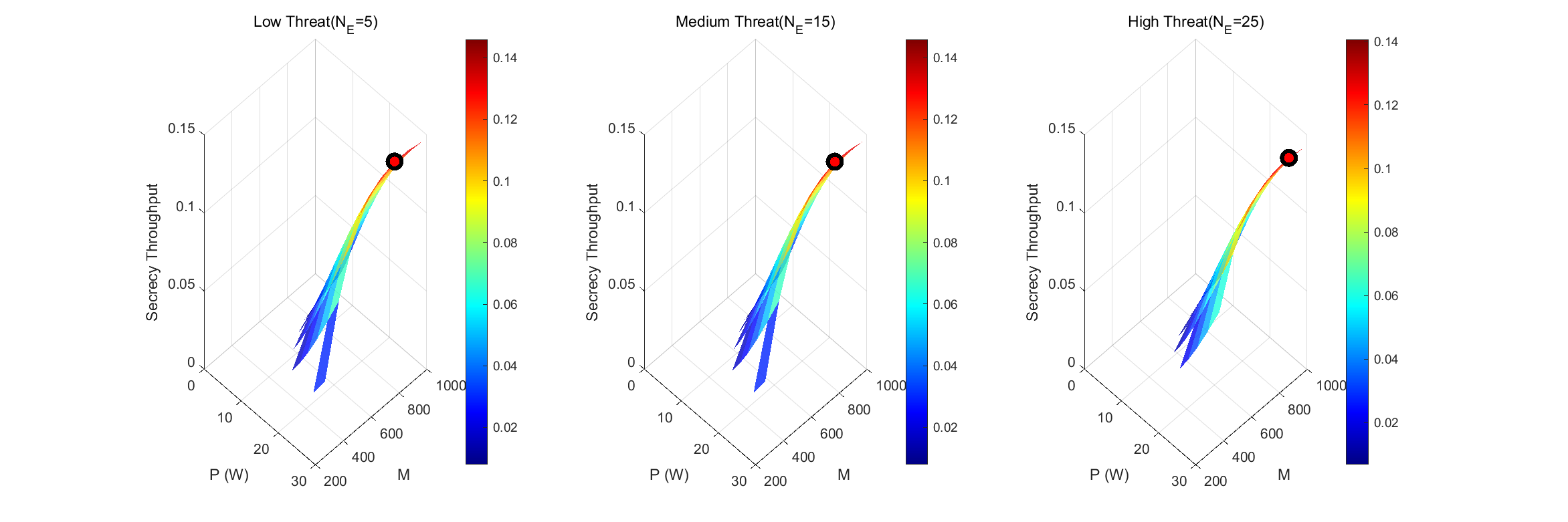}
		\caption{FAS Security Optimization Under Varying Threat Scenarios: 3D Secrecy Throughput Surface Analysis vs $P$ and $M$ for Different Eavesdropper Capabilities ($N_E = 5, 15, 25$).}\label{threat_ana3}
	\end{figure*}

	Figs.~\ref{threat_ana1}, \ref{threat_ana2}, and \ref{threat_ana3} present a comprehensive 3D surface analysis of the FAS security optimization under three eavesdropper threat levels: low ($N_E = 5$), medium ($N_E = 15$), and high ($N_E = 25$). The analysis is conducted using the GS algorithm with normalized antenna size $W = 3$, $m = 300$ bits, RU ports $N_R \in [5, 20]$, transmit power $P \in [0.1, 30]$~W, and blocklength $M \in [200, 1000]$. The black dot in each subplot marks the optimal operating point identified by the GS algorithm. The color gradient from blue to red represents the AAST values from low to high.
	
	Fig.~\ref{threat_ana1} depicts the secrecy throughput surface as a function of $P$ and $N_R$. Under the low-threat scenario ($N_E = 5$), the surface exhibits a broad high-performance plateau (red region) at large $N_R$ and moderate-to-high $P$, with the peak AAST reaching approximately $0.14$~bit/s/Hz. The optimal point is located at the corner of maximum $N_R$ and high $P$, consistent with Proposition~\ref{prop:monotone}. As the threat level increases to $N_E = 15$ and $N_E = 25$, the overall surface height decreases and the high-performance region contracts, but the peak AAST remains comparable ($\approx 0.14$). This indicates that while the eavesdropper's enhanced capability degrades the average throughput landscape, the GS algorithm can still locate an optimal $(N_R^*, P^*)$ that largely compensates for the increased threat. Furthermore, the surface gradient becomes steeper near the low-$N_R$ and low-$P$ boundary under high-threat conditions, highlighting that suboptimal parameter selection incurs a severe performance penalty in hostile environments.
	
	Fig.~\ref{threat_ana2} shows the secrecy throughput surface as a function of $M$ and $N_R$. For the low-threat case ($N_E = 5$), the surface reveals a clear ridge structure along the $N_R$ axis: the AAST first increases with $M$ and then decreases, exhibiting the unimodal blocklength behavior observed in Fig.~\ref{AASTvsM}. The peak AAST ($\approx 0.16$~bit/s/Hz) is achieved at large $N_R$ and an intermediate $M$, confirming that both port diversity and blocklength optimization are essential. As $N_E$ increases to $15$ and $25$, the surface contracts and the ridge shifts toward smaller $M$ values, implying that a shorter blocklength becomes preferable under stronger eavesdropper threats. The peak AAST decreases from approximately $0.16$ ($N_E = 5$) to $0.14$ ($N_E = 15$) and $0.13$ ($N_E = 25$), quantifying the throughput degradation caused by the escalating threat. The monotonically non-decreasing behavior of the AAST with respect to $N_R$ is consistently preserved across all three threat levels, further corroborating Proposition~\ref{prop:monotone}.
	
	Fig.~\ref{threat_ana3} illustrates the secrecy throughput surface as a function of $P$ and $M$, which reveals the most pronounced structural feature among the three views. The surface forms a narrow, elongated ridge in the $(P, M)$ plane: the AAST peaks at a specific combination of moderate $P$ and intermediate $M$, while it drops sharply on both sides of the ridge. For the low-threat case ($N_E = 5$), the ridge extends from approximately $(P, M) \approx (10, 300)$ to $(30, 400)$, with a peak AAST of approximately $0.14$~bit/s/Hz (marked by the black dot at the ridge summit). As $N_E$ increases to $15$ and $25$, the ridge narrows and the high-performance region (red/yellow) shrinks significantly, while the low-performance region (blue) expands. This narrowing effect demonstrates that under high-threat conditions, the system's tolerance for parameter mismatch diminishes considerably --- even small deviations from the optimal $(P^*, M^*)$ can lead to a dramatic throughput loss. The ridge structure also confirms the coupled nature of the $P$-$M$ trade-off: higher transmit power allows higher SNR and thus a broader range of feasible blocklengths, but excessive power provides diminishing returns due to the SNR-independent path-loss ratio $\bar{\gamma}_R/\bar{\gamma}_E$ discussed in Remark~\ref{remark:high_snr}.
	
	The combined analysis of Figs.~\ref{threat_ana1}--\ref{threat_ana3} leads to the following design-oriented insights.
	
	\begin{remark}[Threat-Adaptive Parameter Sensitivity]\label{remark:sensitivity}
		The 3D surface analysis reveals that the sensitivity of the AAST to parameter selection is highly asymmetric across the three optimization dimensions. From Figs.~\ref{threat_ana1} and \ref{threat_ana2}, the AAST is relatively robust to $N_R$ (monotonically non-decreasing, consistent with Proposition~\ref{prop:monotone}) and to $P$ (broad plateau at moderate-to-high $P$). However, from Figs.~\ref{threat_ana2} and \ref{threat_ana3}, the AAST is highly sensitive to the blocklength $M$, exhibiting a sharp peak-and-decay pattern. This asymmetry implies that in practical FAS deployments, \emph{blocklength selection is the most critical design parameter}: a suboptimal $M$ can degrade the throughput far more severely than a suboptimal $P$ or $N_R$. Therefore, system designers should prioritize accurate blocklength optimization while adopting simpler strategies (e.g., maximum $N_R$ and sufficient $P$) for the other parameters.
	\end{remark}

	\section{Conclusion}
	In this paper, we have investigated the physical layer security of a FAS-aided secure short-packet communication system under the VBCM framework. Closed-form and asymptotic expressions for the AAST have been derived by employing a linear approximation of the decoding error probability and Gauss-Chebyshev quadrature. Furthermore, a GS optimization algorithm has been developed to jointly optimize the transmit power, blocklength, and number of RU ports for maximizing the secrecy throughput. Numerical results have validated the accuracy of the analytical expressions and demonstrated that the FAS-aided system achieves significant secrecy performance gains over conventional fixed-position antenna systems. The results have also revealed the existence of an optimal blocklength and highlighted the sensitivity of system performance to parameter selection under varying eavesdropper threat levels.

\end{document}